\documentclass[12pt,a4paper]{article}
% FONTS and SYMBOLS
\usepackage{bbm}
\usepackage{mathrsfs}
\usepackage{latexsym}
\usepackage{lmodern}
\usepackage{paralist}
\usepackage{microtype} % Prettifies pdf output. Uncomment if you have trouble with this. Will also reduce page count.

% TYPOGRAPHY
\usepackage{xpunctuate}
\usepackage[all,british]{foreign} % Typesets abbreviations like e.g.

% MATHEMATICS
\usepackage{amsmath,amsfonts,amssymb,amstext,amsthm} % AMS packages
\usepackage{thmtools, thm-restate}
\usepackage{mathtools}

\usepackage[full,small]{complexity}

% DEBUGGING
 % Rotate keys
\usepackage[notref,color,final]{showkeys} % Use option 'final' to remove

% TOOLS
\usepackage{xargs}
\usepackage{calc}

% COLORS and GRAPHICS
\PassOptionsToPackage{dvipsnames,usenames,table}{xcolor}
\usepackage{tikz}
\usepackage{tkz-graph}
\usetikzlibrary{arrows}
\usetikzlibrary{decorations.pathmorphing}
\usetikzlibrary{calc}

% FIGURES and TABLES and LISTS
\usepackage{booktabs} % Better tables
\usepackage{longtable}
\usepackage{float}
\usepackage{wrapfig}
\usepackage{floatflt}
\usepackage{framed}
\usepackage{ctable}

\def\figscale{.8}
\def\smallfigscale{.6}

\usepackage{enumerate}
\usepackage{enumitem}
\usepackage{xspace}
\usepackage{ifthen}
\usepackage{fixltx2e}
\usepackage{url}
\usepackage{scrtime}
\usepackage{tikz}

\usepackage[ruled,longend,vlined]{algorithm2e}

% MARGIN NOTES
\usepackage{xkvltxp}
\usepackage[footnote,draft,english,silent,nomargin]{fixme}
\newcommand{\todo}[1]{\fxfatal{\color{red}#1}}

%\usepackage{hyperref} % needs to be here, resolves conflicts with other
                      % packages (e.g.AMS)

% Typesetting of important things

 % Natural numbers
  % Real numbers
  % Integers
  % Fractions

\renewcommand{\cal}{\mathcal}

\newcommand{\mc}{\mathcal}

\newcommand{\Problem}[1]{\textsc{#1}\xspace}

 % Graph diameter

% Proper overbar: should look okay for italics and can be adjusted if necessary.
\newcommand\overbar[2][3]{{}\mkern#1mu\overline{\mkern-#1mu#2}}
\def\comp#1{\overbar #1} % Complement
\newcommand\restr[2]{{% Restriction of functions, set families
  \left.\kern-\nulldelimiterspace % automatically resize the bar with \right
  #1 % the function
  \vphantom{\big|} % pretend it's a little taller at normal size
  \right|_{#2} % this is the delimiter
  }}

% Common math

% Slightly bigger symbols (smaller than \big)

% Graph operations & relations
\def\subgraph{\subseteq}

 % Minor relation
 % Top. minor relation
\def\sminor^#1{\preccurlyeq_{{\mathit{m}}}^{#1}\!} % Shallow minor relation
\def\stminor^#1{\preccurlyeq_{{\mathit{t}}}^{#1}\!} % Shallow minor relation

\def\lexprod{\mathbin{{}\bullet{}}}

\def\grad_#1{\nabla\!_#1}
\def\topgrad_#1{\widetilde \nabla\!_{#1}}

 % Or \mho?

  % Set of all cliques
  % Set of all centers
    % Boundary symbol for t-boundaried graphs
   % Accent for t-boundaried graphs
 
        % Equivalence for t-boundaried graphs with t as parameter
   % Prot. repr. size
 % Prot. repr. size with doubled parameter
 % I don't want to type this anymore
 % Use this for a primed version of the decomposition 
\renewcommand{\le}{\leqslant}
\renewcommand{\leq}{\leqslant}

\renewcommand{\geq}{\geqslant}
\renewcommand{\emptyset}{\varnothing} % nicer empty set

% Common width measures
\newcommand{\widthm}[1]{\ensuremath{\mathop\mathbf{#1}}\xspace}

\newcommand{\width}{\widthm{width}}

\newcommand{\dist}{\ensuremath{\text{dist}}}

\def\any{\mathord{\color{black!55}\circ}}%
%

% Tree functions & operators
 % Conn. comps of a graph
 % Root of a tree
 % Path from node to root in tree
 % Leaves of a tree
 % Inner nodes of a tree

 % Closure of a rooted tree

% Functions

% Floor and ceiling brackets
\DeclarePairedDelimiter\ceil{\lceil}{\rceil}

% Arg min, arg max

\DeclareMathOperator*{\argmax}{arg\,max}

% Equivalence relations needed in 
% 'Characterizing bounded expansion by neighbourhood complexity'
\def\nbeq_#1{\simeq^{X}_{#1}}
\def\snbeq_#1{\cong^{X}_{#1}}
\def\rsigeq_#1{\simeq^{X}_{\Sig_{\leq #1}}\!}
\def\prsigeq_#1{\simeq^{X}_{\hat \Sig_{\leq #1}}\!}
\def\sigeq_#1{\simeq^{X}_{#1}\!}

% Define theorems etc.
\theoremstyle{plain}
\newtheorem{lemma}{Lemma}

\newtheorem{theorem}{Theorem}
\newtheorem{corollary}{Corollary}
\newtheorem{observation}{Observation}
\newtheorem{proposition}{Proposition}

\newtheoremstyle{case}
  {\topsep}   % ABOVESPACE
  {\topsep}   % BELOWSPACE
  {}  % BODYFONT
  {\parindent}       % INDENT (empty value is the same as 0pt)
  {\bfseries} % HEADFONT
  {.}         % HEADPUNCT
  {5pt plus 1pt minus 1pt} % HEADSPACE
  {#1 #2: {\normalfont #3}}          % CUSTOM-HEAD-SPEC

\theoremstyle{case}

\numberwithin{subcase}{case}
\makeatletter%Make sure case-counter is reset in every environment
\@addtoreset{case}{lemma}
\@addtoreset{case}{theorem}
\@addtoreset{case}{corollary}
\makeatother

\theoremstyle{definition}

\newtheorem{definition}{Definition}

% Abbreviations, typographically correct

       % \wlog is defined in verbatim.sty for some reason.

\renewcommand*\etal{\xperiodafter{\emph{et~al}}}

%

% notes, todo, remarks

% Tyopgraphic tweaks

\setlist[1]{labelindent=\parindent,leftmargin=*} 
\setlist{itemsep=0pt}
\setitemize[1]{label={\small\textbullet}}

% Problem environment

% Draft environment
\renewenvironmentx{leftbar}[2][1=0.5pt, 2=5pt]{% 
  \MakeFramed {\advance\hsize-\width \FrameRestore}}%
{\endMakeFramed}

% Tikz hacks

% computes width and height of tikzpicture
\makeatletter
\newcommand{\pgfsize}[2]{ % #1 = width, #2 = height
 \pgfextractx{\@tempdima}{\pgfpointdiff{\pgfpointanchor{current bounding box}{south west}}
 {\pgfpointanchor{current bounding box}{north east}}}
 \global#1=\@tempdima
 \pgfextracty{\@tempdima}{\pgfpointdiff{\pgfpointanchor{current bounding box}{south west}}
 {\pgfpointanchor{current bounding box}{north east}}}
 \global#2=\@tempdima
}
\makeatother

% Simulated float
\newlength{\wleft}  \newlength{\wright}

% Taken from http://mintaka.sdsu.edu/GF/bibliog/latex/floats.html
% Alter some LaTeX defaults for better treatment of figures:
% See p.105 of "TeX Unbound" for suggested values.
% See pp. 199-200 of Lamport's "LaTeX" book for details.
%   General parameters, for ALL pages:
  % max fraction of floats at top
 % max fraction of floats at bottom
%   Parameters for TEXT pages (not float pages):
\setcounter{topnumber}{2}
\setcounter{bottomnumber}{2}
\setcounter{totalnumber}{4}     % 2 may work better
\setcounter{dbltopnumber}{2}    % for 2-column pages
 % fit big float above 2-col. text
  % allow minimal text w. figs
%   Parameters for FLOAT pages (not text pages):
  % require fuller float pages
% N.B.: floatpagefraction MUST be less than topfraction !!
 % require fuller float pages

% Helpers

% Named labels
\makeatletter
\def\namedlabel#1#2{\begingroup
   \def\@currentlabel{#1}%
   \label{#2}\endgroup
}
\makeatother

% Hyphenation
\usepackage{hyphenat}
\hyphenation{tree-depth tree-width}

% Hard-to-type names
\def\Nesetril{Ne\v{s}et\v{r}il\xspace}

\def\Dvorak{Dvo\v{r}\'{a}k\xspace}

\def\Kral{Kr\'{a}l\xspace}

\def\Gajarsky{Gajarsk{\'y}\xspace}

% Common references

% Throwaway length
\newlength{\templen}

\newcommand{\NOdM}{\Nesetril and Ossona de Mendez}
\newcommand{\wcol}{\mathrm{wcol}}

\newcommand{\Wreach}{\mathrm{WReach}}

\usepackage{authblk}

\title{Characterising Bounded Expansion by Neighbourhood Complexity}
\author{\hspace*{.5em}Felix Reidl\hspace*{2em}Fernando S\'anchez Villaamil\\Konstantinos Stavropoulos}
\date{\vspace*{-2em}}

\affil{  
RWTH Aachen University 

\texttt{\{reidl,fernando.sanchez,stavropoulos\}@cs.rwth-aachen.de}.
}

%\title{Characterising Bounded Expansion by Neighbourhood Complexity}

\begin{document}

\maketitle 

\begin{abstract}
We show that a graph class $\cal G$ has \emph{bounded expansion} if and only
if it has bounded \emph{$r$-neighbourhood complexity}, \ie for any vertex set
$X$ of any subgraph~$H$ of $G\in\cal G$, the number of subsets of $X$ which
are exact $r$-neighbourhoods of vertices of $H$ on $X$ is linear in the size
of $X$. This is established by bounding the $r$-neighbourhood complexity of a
graph in terms of both its \emph{$r$-centred colouring number} and its
\emph{weak $r$-colouring number}, which provide known characterisations to the
property of bounded expansion.
\end{abstract}

\section{Introduction}\label{sec:intro}

\noindent
Graph classes of \emph{bounded expansion} (and their further generalisation, 
nowhere dense classes) have been introduced by  \NOdM~\cite{BndExpI,BndExpII,Sparsity}
as a general model of \emph{structurally sparse} graph classes. They include
and generalise many other natural sparse graph classes, among them all
classes of bounded degree, classes of bounded genus, and classes defined by
excluded (topological) minors. Nowhere dense classes even include classes that
locally exclude a minor, which in turn generalises graphs with locally bounded
treewidth.

The appeal of this notion and its applications stems from
the fact that bounded expansion has turned out to be a very robust property of
graph classes with various seemingly unrelated characterisations (see
\cite{GKS13,Sparsity}). These include characterisations through
the density of shallow minors~\cite{BndExpI}, \emph{quasi-wideness}~\cite{QuasiWide}
\emph{low treedepth colourings}~\cite{BndExpI}, and \emph{generalised colouring numbers}~\cite{WColBndExp}.
The latter two are particularly relevant towards algorithmic applications,
as we will discuss in the sequel. Furthermore, there is good evidence that 
real-world graphs (often dubbed `complex networks') might exhibit this notion
of structural sparseness~\cite{NetworksBndExp,FelixThesis}, whereas stricter
notions (planar, bounded degree, excluded (topological) minors, \etc) do not apply.

It seems unlikely that bounded-expansion and nowhere dense classes admit global
Robertson-Seymour style decompositions as they are available for classes excluding
a fixed minor~\cite{GraphMinorsXVI}, a topological minor~\cite{GroheMarxDecomp}, 
an immersion~\cite{ImmersionDecomp}, or an odd minor~\cite{OddMinorDecomp}. However,
\Nesetril and Ossona de Mendez showed~\cite{BndExpII} that bounded-expansion and nowhere dense
classes admit a `local' decomposition, a so-called \emph{low $r$-treedepth colouring}, in the following sense: 
for every integer~$r$, every graph~$G$ from a bounded expansion (nowhere dense) class can be
coloured with~$f(r)$ (respectively~$O(n^{o(1)})$)
colours such that every union of~$p < r$ colour classes induces a graph of
treedepth at most~$p$. We denote by $\chi_r(G)$ the minimal number of
colours needed for a low $r$-treedepth colouring of $G$.
These types of colourings generalise the star-colouring
number~\cite{Sparsity} introduced by Fertin, Raspaud,
and Reed~\cite{StarChromaticNumber}. In that context, low $r$-treedepth
colourings are usually
called \emph{$r$-centred colourings}\footnote{
Depending on the way $r$-treedepth colourings are defined, $r$-centred colourings 
might appear in the literature as $r-1$-treedepth colourings, as for example in \cite{Sparsity}. For convenience, here we define them 
in a way so that the gap in the depth $r$ is alleviated.}
(the precise 
definition of which we defer to Section \ref{sec:preliminaries}). 
%The notion of choice to be used in this paper, equivalent to the $r$-treedepth colourings 
%as we defined them above

This `decomposition by colouring' has direct algorithmic implications. For
example, counting how often an~$h$-vertex graph appears in a host graph~$G$ as
a subgraph, induced subgraph or homomorphism is possible in linear
time~\cite{BndExpII} through the application of low $r$-centred ($r$-treedepth) colourings. A
more precise bound for the running time
of $O(|c(G)|^{2h} 6^h h^2 \cdot |G|)$ was shown by Demaine
\etal~\cite{NetworksBndExp} if an appropriate low treedepth
colouring $c$ is provided as input. 
Low $r$-centred ($r$-treedepth) colourings can be further used to check whether an existential 
first-order sentence is true~\cite{Sparsity} or to approximate the problems
\Problem{$\cal F$-Deletion} and \Problem{Induced-$\cal F$-Deletion}
to within a factor that only depends on the 
precise bounded expansion graph class $G$ belongs to
and the set~$\cal F$~\cite{FelixThesis}.

Another characterisation of bounded expansion is obtained via the
\emph{weak $r$-colouring numbers}, denoted by $\wcol_{r}(G)$. The name
`colouring number' reflects the fact that the weak $1$-colouring number is
sometimes also called the \emph{colouring number} of the graph, which
only differs to the \emph{degeneracy} of a graph by one.
Roughly, the weak colouring
number describes how well the vertices of a graph can be linearly ordered such
that for any vertex~$v$, the number of vertices that can reach~$v$ via short
paths that use higher-order vertices is bounded.
We postpone the precise definition of
weak~$r$-colouring numbers to Section \ref{sec:preliminaries}, but let us
emphasise their utility: Grohe, Kreutzer, and
Siebertz~\cite{FONowhereDense} used weak $r$-colouring numbers to prove the milestone result that
first-order formulas can be decided in almost linear time for nowhere-dense classes
(improving upon a result by \Dvorak, \Kral, and Thomas for bounded expansion
classes~\cite{FOBndExp} and the preceding work for smaller sparse
classes~\cite{FOLocalHMinorFree,FOHMinorFree,FOLocalBndTw,FOBndDegree}).

Our work here centres on a new characterisation, motivated by recent progress in
the area of kernelisation. This field, a subset of parametrised complexity theory,
formalises polynomial-time preprocessing of computationally hard problems. For an
introduction to kernelisation we refer the reader to the seminal work by Downey and
Fellows~\cite{DowneyFellows}.
\Gajarsky \etal~\cite{BndExpKernels} extended the meta-kernelisation framework initiated 
by Bodlaender \etal~\cite{Metakernels} for bounded-genus graphs to nowhere-dense
classes (notable intermediate results where previously obtained for
excluded-minor classes~\cite{BidimKernels} and classes excluding a topological
minor~\cite{HTopFreeKernels}). In a largely independent line of research,
Drange~\etal recently provided a kernel for \textsc{Dominating Set} on
nowhere-dense classes~\cite{kernelmillionauthors}.
Previous results showed kernels for planar
graphs~\cite{PlanarDSKernel}, bounded-genus graphs~\cite{Metakernels}, apex-
minor-free graphs~\cite{BidimKernels}, graphs excluding a
minor~\cite{DomsetKernelHMinorFree} and graphs excluding a topological
minor~\cite{DomsetKernelHTopFree}.

A feature exploited heavily in the above kernelisation results 
for bounded expansion classes is that for any
graph~$G$ from such a class, every subset~$X \subseteq G$ has the property
that the number of ways vertices from~$V(G) \setminus X$ connect to~$X$ is linear
in the size of~$X$. Formally, we have that
\[
  |\{ N(v) \cap X \}_{v \in V(G)}| \leq c \cdot |X|
\]
where~$c$ only depends on the graph class from which~$G$ was drawn. One
wonders whether this property of bounded expansion classes can be turned into a
characterisation. It is, however, missing one important ingredient present in
all known notions related to bounded expansion: a notion of \emph{depth} via an
appropriate distance-parameter. This brings us to the central notion of our work: 
If we denote by~$N^r[\any]$ the closed
$r$-neighbourhood around a vertex, we define the \emph{$r$-neighbourhood
complexity} as
\[
  \nu_r(G) := 
    \max_{H \subseteq G, \emptyset \neq X \subseteq V(H)}
     \frac{|\{ N^r[v] \cap X\}_{v \in H} |}{|X|}.
\]
That is, the value~$\nu_r$ tells in how many different ways vertices can
be joined to a vertex set~$X$ via paths of length at most~$r$. Note that we define
the value over all possible subgraphs: otherwise uniform dense graphs
(\eg complete graphs) would yield very low values\footnote{While this might be an
interesting measure in and of itself, in this work we want to develop a
measure for sparse graph classes and therefore choose the above definition.}.

The main result of this paper is the following characterisation of bounded
expansion through neighbourhood complexity. We say that a graph class~$\mc G$
has \emph{bounded neighbourhood complexity} if there exists a function~$f$
such that for every~$r$ it holds that $\nu_r(\mc G) \leq f(r)$.

\begin{theorem}\label{thm:bndexp-equals-bndnc}
  A graph class $\cal G$ has bounded expansion if and only if it has
  bounded neighbourhood complexity.
\end{theorem}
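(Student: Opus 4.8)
The plan is to establish the two implications separately, and in each direction to route the argument through one of the two classical characterisations of bounded expansion recalled above: the weak $r$-colouring numbers $\wcol_r$ for the forward direction, and the (topological) grads $\widetilde\nabla_r$ for the converse. I first record that $\nu_r$, $\wcol_r$ and $\widetilde\nabla_r$ are all monotone under taking subgraphs, so in every case it suffices to analyse a single host graph $H\subseteq G$ together with a fixed nonempty $X\subseteq V(H)$; the quantification over subgraphs in the definition of $\nu_r$ is then free to use.

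For the direction \emph{bounded expansion $\Rightarrow$ bounded neighbourhood complexity}, fix $H$, $X$, and an ordering $\sigma$ of $V(H)$ witnessing $c:=\wcol_r(H)$. The starting observation is that weak reachability lets me compress the trace $N^r[v]\cap X$ into a small certificate. Concretely, if $\dist_H(v,x)\le r$ then the $\sigma$-minimal vertex $w$ on a shortest $v$--$x$ path lies in $\Wreach_r[v]\cap\Wreach_r[x]$ and satisfies $\dist(v,w)+\dist(w,x)=\dist(v,x)$; conversely any such common witness yields a walk of length $\le r$. Hence, writing $W^\ast=\bigcup_{x\in X}\Wreach_r[x]$ (so $|W^\ast|\le c\,|X|$), the trace of $v$ is completely determined by its \emph{relevant profile} $P_v=\{(w,\dist(v,w)) : w\in\Wreach_r[v]\cap W^\ast\}$, a set of at most $c$ labelled witnesses. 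Counting distinct profiles crudely gives only a bound polynomial in $|X|$ of degree $c$, and the hard part will be to improve this to the required \emph{linear} bound. I expect that degeneracy alone cannot suffice here --- the $1$-subdivisions of cliques are $2$-degenerate yet realise $\binom{|X|}{2}$ distinct traces --- so the argument must genuinely exploit that \emph{all} weak colouring numbers (equivalently, the $r$-centred colouring numbers $\chi_r$) are bounded.

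To obtain linearity I would therefore switch to the $r$-centred/treedepth picture. Using a $p$-centred colouring with $\chi_p(H)$ colours for a suitable $p=p(r)$, any few colour classes span a subgraph of treedepth bounded in terms of $r$; the plan is to reduce to a base case asserting that graphs of bounded treedepth have \emph{constant} $r$-neighbourhood complexity, and then reassemble the global bound from the boundedly many low-treedepth pieces that a length-$\le r$ path can meet. The base case I would prove by induction on the elimination forest: the ancestors of $v$ on its root-path form a bounded separator through which every short $v$--$X$ path must pass, so a trace is determined by a bounded amount of data attached to this path, and the tree structure permits charging distinct traces to vertices of $X$ with bounded multiplicity. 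Getting this charging to be genuinely linear rather than polynomial is, I believe, the principal technical obstacle of the whole theorem.

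For the converse I argue by contraposition: if $\mathcal G$ does not have bounded expansion then $\sup_{G\in\mathcal G}\widetilde\nabla_{r_0}(G)=\infty$ for some fixed $r_0$, so for every $m$ some $G\in\mathcal G$ contains a subdivision of $K_m$ with all branch paths of length at most $2r_0+1$, i.e.\ $K_m\tminor G$ at depth $r_0$. Applying Ramsey's theorem to the at most $2r_0+1$ possible path lengths, I pass to a complete subpattern in which all branch paths share a common length $\ell\ge 2$ (the degenerate case $\ell=1$, a clique subgraph, is handled by noting that large cliques already contain subdivided cliques as subgraphs). Let $H\subseteq G$ be this uniform subdivision and $X$ its set of $m$ branch vertices $b_1,\dots,b_m$. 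For each pair $\{i,j\}$ I select the internal vertex $m_{ij}$ of the $i$--$j$ path at distance $\lfloor\ell/2\rfloor$ from $b_i$; since the branch paths are internally disjoint and their inner vertices have degree $2$, any walk from $m_{ij}$ to a third branch vertex $b_k$ must leave through $b_i$ or $b_j$, and as $\dist_H(b_i,b_k)=\ell$ it is then strictly longer than $\lceil\ell/2\rceil$. Choosing the radius $r''=\lceil\ell/2\rceil$ thus makes $N^{r''}[m_{ij}]\cap X$ equal to $\{b_i,b_j\}$, so the $\binom{m}{2}$ pairs produce $\binom{m}{2}$ distinct traces on the $m$-element set $X$, forcing $\nu_{r''}(\mathcal G)\ge (m-1)/2$. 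As $m$ is unbounded, $\mathcal G$ has unbounded neighbourhood complexity, completing the contrapositive. The only delicate point here is the uniformisation of path lengths and the attendant distance bookkeeping (in particular the parity of $\ell$, which is handled uniformly by the above choice of $m_{ij}$ and $r''$).
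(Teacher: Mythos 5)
Both directions of your proposal have genuine gaps, and the second one contains a step that is actually false.

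\textbf{Forward direction.} What you present is a plan, not a proof: you correctly identify that the naive profile count over $W^\ast=\bigcup_{x\in X}\Wreach_r[x]$ gives only a degree-$c$ polynomial in $|X|$, and you then defer the reduction to linearity to an unproven ``base case plus reassembly'' via low-treedepth colourings, explicitly flagging the charging argument as the principal unresolved obstacle. That obstacle is real: even if each union of $r+1$ colour classes had constant neighbourhood complexity, a trace $N^r[v]\cap X$ is assembled from its restrictions to $\binom{\xi}{r+1}$ such unions, and recombining independently bounded pieces gives a product, not a sum --- so the sketch as stated does not converge to a linear bound. The paper closes exactly this gap in two independent ways: via $(2r+2)$-centred colourings it shows that for a \emph{proper} signature $\sig$ the $\sig$-neighbourhoods are either disjoint or equal, that the classes arising from different proper signatures form a \emph{laminar} family, and hence that every intersection of such classes is already a member of the family, which keeps the total count at $|\hat\Sig|\cdot|X|$ (Lemmas~\ref{lem:sigma-nb}--\ref{lem:proper-sigma-complexity}); alternatively, via $\wcol_{2r}$, it charges equivalence classes to the last vertex (under $L$) of a witness set $Y_\kappa\subseteq\Wreach_r[G,L,X]$ with multiplicity at most $(r+1)^{\wcol_{2r}(G)}2^{\wcol_{2r}(G)-1}$. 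Your $W^\ast$ is essentially the paper's $\Wreach_r[G,L,X]$, but the multiplicity-control idea is the whole content of Theorem~\ref{thm:nb-bound-weak} and is missing from your write-up.

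\textbf{Backward direction.} The step ``$\sup_{G\in\mathcal G}\topgrad_{r_0}(G)=\infty$ for some $r_0$, so for every $m$ some $G\in\mathcal G$ contains a subdivision of $K_m$ with all branch paths of length at most $2r_0+1$'' is false. Unbounded density at some fixed depth does not yield arbitrarily large clique subdivisions at that (or any other fixed) depth: take $\mathcal G=\{G_n\}$ where $G_n$ has girth at least $n$ and average degree at least $n$. Then $\topgrad_0(\mathcal G)=\infty$, so $\mathcal G$ has unbounded expansion, yet for every fixed $r$ a $K_3$-subdivision at depth $r$ would force a cycle of length at most $3(2r+1)$, so no $G_n$ with $n>3(2r+1)$ contains even $K_3$ as a depth-$r$ topological minor. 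Your contrapositive therefore only handles \emph{somewhere dense} classes and says nothing about nowhere dense classes of unbounded expansion. The paper avoids clique subdivisions entirely: in Theorem~\ref{thm:bndnc-to-bndexp} it keeps the dense shallow topological minor $H$ as an arbitrary dense graph, uniformises the branch-path lengths by majority rather than Ramsey, and then either falls back on Corollary~\ref{cor:nu-subgraph} (which extracts many distinct traces from a dense bipartite subgraph of $G$ itself via Lemma~\ref{lemma:skewed-bipartite} and a twin-class count) or, for subdivision length $r'\geq 2$, uses the path midpoints to realise one distinct trace per edge of $H'$ --- your midpoint computation is essentially this second case, but it is only reached after the reduction you cannot make. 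To repair your argument you would need the dense-subgraph analysis of Lemma~\ref{lemma:mindeg-nu-1}, which is the non-trivial half of the converse.
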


\noindent
Specifically, we show that the following relations between the $r$-neighbourhood
complexity $\nu_r$, the $r$-centred colouring number $\chi_r$, and the weak $r$-colouring number $\wcol_r$
of a graph.

\begin{restatable}{theorem}{nbcentred}\label{thm:nb-bound-centred} 
  For all graphs~$G$ and all non-negative integers~$r$ it holds that
  \[
    \nu_r(G) \leq (r+1)2^{\chi_{2r+2}(G)^{r+2}}.
  \]
\end{restatable}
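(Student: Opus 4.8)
The plan is to bound, for a fixed subgraph $H \subseteq G$ and a fixed nonempty set $X \subseteq V(H)$, the number of distinct traces $\{N^r[v]\cap X : v\in V(H)\}$ by $(r+1)\,2^{\chi_{2r+2}(G)^{r+2}}\cdot|X|$; dividing by $|X|$ and maximising over $H$ and $X$ then yields the theorem. First I would pass to the subgraph~$H$: any $(2r+2)$-centred colouring of~$G$ restricts to one of~$H$, since connected subgraphs of~$H$ are connected subgraphs of~$G$ and treedepth is subgraph-monotone, so $\chi_{2r+2}(H)\le \chi_{2r+2}(G)=:p$. It therefore suffices to work inside~$H$ with a fixed $(2r+2)$-centred colouring $c\colon V(H)\to[p]$, whose defining property is that any union of at most $2r+1$ colour classes induces a subgraph of treedepth at most $2r+1$.

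The core idea is to assign to every vertex~$v$ with $N^r[v]\cap X\neq\emptyset$ a compact \emph{signature} that already determines its trace, and then to count signatures. I would anchor~$v$ at a closest vertex $x_v\in X$, at distance $d_v=\dist(v,x_v)\le r$, breaking ties by a fixed global ordering of $V(H)$. Since every $y\in N^r[v]\cap X$ satisfies $\dist(x_v,y)\le d_v+r\le 2r$, the whole trace lives in the fixed set $N^{2r}[x_v]\cap X$, which depends only on the anchor and not on~$v$. To record which elements of this set actually belong to $N^r[v]$, I would attach to~$v$ the collection $\tau_v$ of colour sequences realised along shortest paths from~$v$ to trace vertices: each such path has at most $r+1$ vertices, so its colour sequence is one of at most $p+p^2+\cdots+p^{r+1}\le p^{r+2}$ possibilities, and $\tau_v$ is thus a subset of a ground set of size at most $p^{r+2}$. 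The signature of~$v$ is the triple $(x_v,d_v,\tau_v)$, ranging over at most $|X|\cdot(r+1)\cdot 2^{p^{r+2}}$ values.

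The heart of the argument, and the step I expect to be the main obstacle, is the claim that the signature determines the trace: if $v$ and $v'$ share the same anchor~$x$, the same distance, and the same colour-sequence set, then $N^r[v]\cap X=N^r[v']\cap X$. The difficulty is that a realisable colour sequence ending at a vertex~$y$ from~$v$ does not \emph{a priori} single out the same endpoint~$y$ from~$v'$. This is precisely where $(2r+2)$-centredness enters: for any trace vertex~$y$, the union of a shortest $x$--$v$ path and a shortest $v$--$y$ path is a connected subgraph on at most $2r+1$ vertices, hence spans at most $2r+1$ colours and lies in an induced subgraph of treedepth at most $2r+1$. Inside such bounded-treedepth pieces the uniquely appearing colours supply a canonical coordinate system, which I would use to match the colour sequences realised from~$v$ with the reachable endpoints of~$X$ around the common anchor, thereby forcing the two $r$-balls to meet~$X$ in the same set. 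Making this correspondence precise — verifying that shortest-path distances and endpoints are preserved once colour patterns are matched across two vertices with a shared anchor — is the technical core; once it is established, counting signatures gives $|\{N^r[v]\cap X\}|\le (r+1)\,2^{p^{r+2}}\,|X|$, and hence $\nu_r(G)\le (r+1)\,2^{\chi_{2r+2}(G)^{r+2}}$.
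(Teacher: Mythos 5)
Your reduction to counting signatures hinges on the claim that the triple $(x_v,d_v,\tau_v)$ determines $N^r[v]\cap X$, and this claim is false --- the gap is not a missing verification but a step that cannot be made to work. Take $r=1$ and the path $y_1v_1xv_2y_2$ with $X=\{x,y_1,y_2\}$, coloured by $c(x)=1$, $c(v_1)=c(v_2)=2$, $c(y_1)=c(y_2)=3$. Every connected subgraph is a subpath and contains a uniquely coloured vertex, so this colouring is centred, in particular $(2r+2)$-centred; no strengthening of the centredness hypothesis can therefore rescue the argument. With $x$ first in the global order, both $v_1$ and $v_2$ receive anchor $x$, distance $1$, and the same colour-sequence set $\{(2,1),(2,3)\}$, yet $N^1[v_1]\cap X=\{x,y_1\}\neq\{x,y_2\}=N^1[v_2]\cap X$. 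Replacing the path by a spider with centre $x$ and $k$ legs $xv_iy_i$ produces $k$ vertices sharing a single signature but realising $k$ pairwise distinct traces, so no per-signature bound on the number of traces exists. Note that the offending colour sequences here are even \emph{proper}: the obstruction is not only that repeated colours blur endpoints, but that a realisable colour sequence never singles out \emph{which} vertex of $X$ it reaches, and a common anchor does not localise the trace enough to repair this.

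The paper's proof gets the crucial factor $|X|$ from an entirely different mechanism. It first proves an all-or-nothing lemma for proper signatures ($N^\sigma(u)$ and $N^\sigma(v)$ are either equal or disjoint --- in the example above they are disjoint and both non-empty, which is exactly what your argument overlooks), then shows that the classes of the relations $\simeq^X_\sigma$ over varying proper $\sigma$ form a laminar family, and bounds the number of classes of their common refinement by $|\hat\Sigma|\cdot|X|$; the $|X|$ comes from the laminar structure of the sets $N^\sigma(\cdot)\cap X$, not from an anchor choice. Arbitrary signatures are then reduced to proper ones by passing to the lexicographic product $G\lexprod\overline{K_{r}}$ with colours $(c(v),i)$, a step your sketch has no counterpart for. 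To salvage your outline you would essentially have to import all of this machinery; the anchor-plus-colour-sequence signature on its own does not determine the trace.
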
 

\begin{restatable}{theorem}{nbweak}\label{thm:nb-bound-weak} 
  For every graph~$G$ and all non-negative integers~$r$ it holds that
  \[
    \nu_r(G) \leq \frac{1}{2}(2r+2)^{\wcol_{2r}(G)}\wcol_{2r}(G)+1.
  \]
\end{restatable}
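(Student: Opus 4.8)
The plan is to fix a linear order $\sigma$ of $V(G)$ witnessing $\wcol_{2r}(G)=k$, and to reduce the count of distinct neighbourhood traces to a count of weak-reachability \emph{signatures}. Since weak colouring numbers are monotone under subgraphs, $\wcol_{2r}(H)\le\wcol_{2r}(G)$ for every $H\subseteq G$, so it suffices to bound, for a fixed $H$ (which I rename $G$) and a fixed nonempty $X\subseteq V(G)$, the number of distinct sets $N^r[v]\cap X$ over $v\in V(G)$.

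First I would prove a meeting-point lemma: if $\dist(v,x)\le r$, then choosing a shortest $v$--$x$ path and letting $m$ be its $\sigma$-minimal vertex yields $m\in\Wreach_r[\sigma,v]\cap\Wreach_r[\sigma,x]$ with $\dist(v,m)+\dist(m,x)=\dist(v,x)\le r$; conversely any such $m$ forces $\dist(v,x)\le r$. This gives the exact identity
\[
  N^r[v]\cap X=\bigcup_{m\in\Wreach_r[\sigma,v]}\{x\in X: m\in\Wreach_r[\sigma,x]\text{ and }\dist(m,x)\le r-\dist(v,m)\},
\]
whose right-hand side depends on $v$ only through the signature $S_v=\{(m,\dist(v,m)) : m\in\Wreach_r[\sigma,v]\}$. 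Hence the number of distinct traces is at most the number of distinct signatures, and since the only meeting points $m$ that can contribute lie in $M:=\bigcup_{x\in X}\Wreach_r[\sigma,x]$, with $|M|\le\wcol_r(G)\,|X|\le k|X|$, I may restrict each signature to the domain $M\cap\Wreach_r[\sigma,v]$, a set of size at most $k$.

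The heart of the argument is to turn this into a bound that is linear, not merely polynomial, in $|X|$. My plan is to charge each nonempty trace to a single anchor associated with $X$: take a representative $v$, and among the meeting points witnessing elements of the trace pick the $\sigma$-minimal one, $m_0$, together with a witness $x_0\in X$ with $m_0\in\Wreach_r[\sigma,x_0]$. The key structural observation, which is exactly where the factor $2r$ enters, is a ``two-legged'' weak-reachability fact: for meeting points $m,m'\in\Wreach_r[\sigma,v]$ with $m'\le_\sigma m$, concatenating the two length-$\le r$ reachability paths through $v$ shows $m'\in\Wreach_{2r}[\sigma,m]$; in particular every relevant meeting point weakly $2r$-reaches $m_0$, so the whole signature is organised around $m_0$ and encodes, relative to the at most $k$ vertices of $\Wreach_{2r}[\sigma,x_0]$, a distance pattern valued in $\{0,1,\dots,2r\}$ together with a symbol for ``absent''. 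Counting anchors ($|X|$ choices for $x_0$, at most $k$ for the distinguished $m_0$) against distance patterns (at most $(2r+2)^{k}$) yields the claimed $\tfrac12(2r+2)^{k}k\,|X|$ after an over-count refinement contributing the factor $\tfrac12$, and the lone empty trace accounts for the additive $+1$.

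The step I expect to be the main obstacle is precisely this last encoding: making the charging \emph{exact}, i.e.\ proving that the anchor $x_0$ (or $m_0$) together with an $O_{r,k}(1)$-sized distance pattern reconstructs the entire trace rather than only a superset of it. The difficulty is that weak reachability is not symmetric, so while every relevant meeting point weakly $2r$-reaches $m_0$, the reverse containment one would like---that all relevant meeting points sit inside a single weakly $2r$-reachable set anchored at $X$---is delicate and must be extracted from the interplay of $\sigma$-minimality with the $2r$ bound. Once this is pinned down, the remaining computations are routine.
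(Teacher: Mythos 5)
Your overall architecture mirrors the paper's proof quite closely: your ``meeting points'' (the $\sigma$-minimal vertices of shortest $v$--$x$ paths) are exactly the paper's sets $Y_\kappa\subseteq\Wreach_r[G,L,v_\kappa]\cap\Wreach_r[G,L,N^r[v_\kappa]\cap X]$, your two-legged concatenation is the paper's key lemma, your distance pattern gives the $(r+1)^{k}$ factor, and your ``subsets containing the anchor'' give the $2^{k-1}$ factor. But the step you yourself flag as the main obstacle is a genuine gap, and as stated your charging scheme fails because you anchor at the wrong end of the order. You pick $m_0$ to be the $\sigma$-\emph{minimal} relevant meeting point. The two-legged lemma, applied with $m'=m_0$, then yields $m_0\in\Wreach_{2r}[\sigma,m]$ for every other meeting point $m$ --- i.e.\ $m_0$ lies in everyone's weak reach. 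This bounds nothing: the set $\{m: m_0\in\Wreach_{2r}[\sigma,m]\}$ is the \emph{preimage} of weak reachability, whose size is not controlled by $\wcol_{2r}(G)$ (think of a star ordered with its centre first: every leaf weakly $1$-reaches the centre). Consequently there is no single set of size at most $k$ known to contain all relevant meeting points, and the ``distance pattern relative to the at most $k$ vertices of $\Wreach_{2r}[\sigma,x_0]$'' is not justified --- there is no reason the meeting points should lie in $\Wreach_{2r}[\sigma,x_0]$ at all (a path from $x_0$ to a generic meeting point $m$ built through $m_0$ and $v$ has length up to $3r$ and need not have $m$ as its minimum).

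The fix is a one-line change and is precisely what the paper does: anchor at the $\sigma$-\emph{maximal} relevant meeting point $m_1$ (the paper's $\gamma(Y_\kappa)=\argmax_{y\in Y_\kappa}L(y)$). Then every relevant meeting point $m$ satisfies $m\le_\sigma m_1$, so the two-legged lemma gives $m\in\Wreach_{2r}[\sigma,m_1]$, and the whole meeting-point set sits inside the single set $\Wreach_{2r}[\sigma,m_1]$ of size at most $k$. Since $m_1$ is itself weakly $r$-reachable from some $x\in X$, it ranges over $\Wreach_r[\sigma,X]$, a set of size at most $k|X|$; the meeting-point set is then one of at most $2^{k-1}$ subsets of $\Wreach_{2r}[\sigma,m_1]$ containing $m_1$, and the distances from $v$ contribute $(r+1)^{k}$, recovering $\tfrac12(2r+2)^{k}k|X|$ plus $1$ for the empty trace. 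With that correction your argument closes and coincides with the paper's; without it, the count of distinct signatures remains polynomial of degree $k$ in $|X|$ rather than linear.
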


\noindent
The characterisation of bounded expansion through
generalised colouring numbers in \cite{WColBndExp} was provided by
relating $r$-centred colourings to generalised colouring numbers. We
believe that this interaction of the two notions is also highlighted in this
paper, in the sense that when one can use one of the two notions as a direct proof
tool, it might often be the case that the other might also serve as a direct proof
tool, the most appropriate to be chosen depending on the occasion. 
As we believe it is also the case with neighbourhood complexity, 
it is still, as a consequence, useful to have access to a result through both parameters, 
since the general known bounds relating $r$-centred colourings
and generalised colouring numbers seem to be very loose and most probably not
optimal. For example, it is still unclear to our knowledge if one is always smaller than the
other. Moreover, bounds for both parameters are not in general known for all
kinds of specific graph glasses. It can then be the case that for different questions and different
graph classes, $r$-centred colourings are more appropriate than generalised
colouring numbers or vice versa.

\def\sig{\sigma} % Maybe sigma is not a good idea.
\def\Sig{\cal S} % Maybe sigma is not a good idea.

%   888888ba   888888ba  dP        8888ba.88ba  .d88888b  
%   88    `8b  88    `8b 88        88  `8b  `8b 88.    "' 
%  a88aaaa8P' a88aaaa8P' 88        88   88   88 `Y88888b. 
%   88         88   `8b. 88        88   88   88       `8b 
%   88         88     88 88        88   88   88 d8'   .8P 
%   dP         dP     dP 88888888P dP   dP   dP  Y88888P  
%                                                         
%                                                         
\section{Preliminaries}\label{sec:preliminaries}

\noindent
The main challenge is to prove that graphs from a graph class
of bounded expansion have low neighbourhood complexity. To this end, some 
definitions will be necessary to prove Theorems~\ref{thm:nb-bound-centred} 
and~\ref{thm:nb-bound-weak}.

\subsection{Graphs and Signatures}

\noindent
For an integer~$n$ we write~$[n] = \{1,\ldots,n\}$. All logaritms in this paper are of base $2$
and we only write $ \log x$ instead of $\log_2 x$. 
We only consider non-empty, finite and simple graphs. For a graph~$G$ we write~$V(G)$ and~$E(G)$
to denote vertices and edges of~$G$, respectively. We use the
notations~$|G| = |V(G)|$ and~$\|G\| = |E(G)|$. Following the notation
of Diestel~\cite{Diestel}, we denote an edge between two nodes $u,v \in V(G)$
by $uv$. In the following we will sometimes use the symbol~$\any$ to
denote an arbitrary vertex and it should be understood that each
occurence of~$\any$ can denote a different vertex. The statement
`there exist two edges~$v\any$,$w\any$' therefore means~`there exist
two edge~$va_1$,$va_2$' and we will prefer the former if $a_1,a_2$ are
not referenced later.

For a vertex~$v
\in V(G)$, we denote by $N^r_G(v) := \{ u \in V(G) \mid \dist_G(u,v) =r\}$ the
\emph{$r$-th neighbourhood around~$v$} for $r \geq 0$. Analogously, the
$r$-th \emph{closed} neighbourhood around~$v$ is defined as
$N^r_G[v] := \bigcup_{i = 0}^r N^i(v)$. In particular,
$N_G^0(v) = N_G^0[v] = \{v\}$.
We usually omit the subscript~$G$ if the context is clear.

A \emph{signature}~$\sig$ over a universe~$U$ is a sequence of 
elements~$(u_i)_{1 \leq i \leq \ell}, u_i \in U$ where $\ell$ is the
\emph{length} of the signature, also denoted by~$|\sig|$. Accordingly, an
\emph{$\ell$-signature} is simply a signature of length~$\ell$.
We use the notation~$\sig[i] := u_i$ to signify the $i$-th element
of~$\sig$. A signature is \emph{proper} if all its elements
are distinct. We assume that the elements of $U$ are ordered.
We assume an order on all signatures (say, lexicographic).
Thus for a set~$\Sig$ of signatures and a
function~$f\colon \Sig \to A$ for an arbitrary set~$A$, we employ
the notation~$(f(\sig))_{\sig \in \Sig}$ to obtain sequences over elements
of~$A$ derived from that ordering. For example, $(|\sig|)_{\sig \in \{\sig_a,\sig_b,\sig_c\}}$
is shortand for the sequence~$(|\sig_a|,|\sig_b|,|\sig_c|)$ if~$\sig_a \leq \sig_b \leq \sig_c$
according to our (arbitrary) total order.

For a path~$P = x_1\ldots x_\ell$ we write~$P[x_i,x_j] = x_i \ldots x_j$ to
denote the subpath of~$P$ starting at~$x_i$ and ending at~$x_j$. As such,
we treat paths as ordered. Similarly, for an integer~$1 \leq i \leq |P|$
we denote by $P[i]$ the $i$-th vertex on the path and we call~$i$ the
\emph{index} of that vertex on~$P$. Hence, for non-empty paths, $P[1]$ is the
start and $P[|P|]$ the end of the path. If~$G$ is a graph coloured by~$c\colon
V(G) \to [\xi]$ for some~$\xi \in \mathbb{N}$ and~$P$ is a path in~$G$, then
we write $\sig_{P}$ to denote the $|P|$-signature over~$[\xi]$ with
$\sig_{P}[i] = c(P[i])$. For a fixed signature~$\sig$, we say that $P$ is a
\emph{$\sig$-path} if $\sig_{P} = \sig$.

For a fixed signature~$\sig$ over~$[\xi]$, we define the 
\emph{$\sig$-neighbourhood} of a vertex~$v$ in $G$ as
\[
  N^{\sig}(v) := \{ w \in V(G) \mid \exists vPw~\text{such that}~ \sig_{vPw} = \sig \}
\]
Note that $N^{\sig}(v) \subseteq N^{|\sig|}(v)$ and that~$N^{\sig}(v) = \emptyset$ whenever
$\sig[1] \neq c(v)$. We use the following extension to vertex sets~$X \in V(G)$
and sets of signatures~$\Sig$ over~$[\xi]$:
\begin{align*}
  N^{\Sig}(v) &:= \bigcup_{\sig \in \Sig} N^{\sig}(v) &&&
  N^{\sig}(X) &:= \bigcup_{v \in X} N^{\sig}(v) &&&
  N^{\Sig}(X) &:= \bigcup_{v \in X} \bigcup_{\sig \in \Sig} N^{\sig}(v)  
\intertext{%
Similarly, the \emph{$\sig$-in-neighbourhood} of a vertex~$v$ is defined 
\[
  N^{-\sig}(v) := \{ w \in V(G) \mid \exists wPv~\text{such that}~ \sig_{wPv} = \sig \}
\]
and we extend this notation to vertex and signature sets in the same manner as above:
} 
  N^{-\Sig}(v) \! &:= \!\bigcup_{\sig \in \Sig} \! N^{-\sig}(v) &&&
  N^{-\sig}(X) \! &:= \!\bigcup_{v \in X} \! N^{-\sig}(v) &&&
  N^{-\Sig}(X) \! &:= \!\bigcup_{v \in X} \! \bigcup_{\sig \in \Sig} \! N^{-\sig}(v)  
\end{align*}
The following basic fact about $\sigma$-neighbourhoods for proper signatures~$\sigma$
is easy to verify.

\begin{observation}\label{obs:centre-intersect}
  Let $u,v \in V(G)$ be distinct vertices and $uP\any$, $vP\any$
  be two $\sig$-paths for some proper signature $\sig$. 
  Then for any $x \in u\sig\any \cap v\sig\any$
  it holds that $x$ has the same index on both $u\sig\any$ and $v\sig\any$
  and that $x$s colour appears exactly once in $uP\any \cup vP\any$.
\end{observation}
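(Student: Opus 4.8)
The plan is to unwind the definition of a $\sig$-path and lean on the properness of $\sig$ twice, in two slightly different ways. Write $\sig = (s_1,\ldots,s_\ell)$ with $\ell = |\sig|$; as $\sig$ is proper, the colours $s_1,\ldots,s_\ell$ are pairwise distinct. By the definition of a $\sig$-path, for either of the paths $uP\any$ and $vP\any$ and every index $i$, the $i$-th vertex has colour $\sig[i] = s_i$. Reading this the other way around, each colour $s_i$ occurs along such a path exactly once, namely at index $i$.

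First I would establish that $x$ shares its index on the two paths. Let $x \in uP\any \cap vP\any$ and let $i$ and $j$ be the indices of $x$ on $uP\any$ and $vP\any$, respectively. Then $s_i = c(x) = s_j$, and since the $s_k$ are pairwise distinct this forces $i = j$, as claimed.

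For the colour-uniqueness part I would then argue that the colour $c(x) = s_i$ appears on $uP\any$ only at index $i$, and the vertex sitting there is $x$ itself; running the same argument on $vP\any$ (where, by the previous step, $x$ again sits at index $i$) singles out the same vertex $x$. Hence $x$ is the unique vertex of colour $c(x)$ in the union $uP\any \cup vP\any$, which is exactly the second assertion.

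The whole argument is elementary and short; the only thing to be careful about is keeping the two uses of properness apart — the first turns a repeated colour into a common index, while the second guarantees that within a single path a colour cannot recur — and treating $uP\any \cap vP\any$ and $uP\any \cup vP\any$ as operations on the vertex sets of the paths, so that the shared vertex $x$ is counted once rather than twice. I do not anticipate a genuine obstacle here beyond this bookkeeping, since the statement is essentially a direct consequence of the defining property of $\sig$-paths.
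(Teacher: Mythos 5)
Your argument is correct and is exactly the direct verification the paper has in mind (the paper states this observation without proof, introducing it as "easy to verify"): properness forces each colour to occupy a unique index, which both identifies the shared index of $x$ and pins down $x$ as the only vertex of its colour on each path, hence on the union. No gaps.
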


\noindent
Finally the \emph{lexicographic product} $G_1 \lexprod G_2$ is the
graph with vertices $V(G_1) \times V(G_2)$, where two nodes $(u,x)$
and $(v,y)$ are connected by an edge iff either
\begin{inparaenum}
\item[a)] $uv \in E(G_1)$ or
\item[b)] $u = v ~\text{and}~ xy \in E(G_2)$.
\end{inparaenum}

\subsection{Grad and Expansion}

\noindent
The property of \emph{bounded expansion} was introduced by
\Nesetril and Ossona de Mendez using the notion of shallow
minors~\cite{BndExpI,BndExpII}: the basic idea is to exclude
different minors depending on how `local' the contracted portions of the graph
is. Building on \Dvorak's work~\cite{DvorakThesis}, \Nesetril, Ossona de Mendez,
and Wood later introduced an equivalent definition via shallow
\emph{topological} minors~\cite{BndExpClasses}. This seem surprising at first, since graphs defined
via (unrestricted) forbidden minors are vastly different objects than graphs
defined via forbidden \emph{topological} minors. We will only introduce the
topological variant here.

\begin{definition}[Topological minor embedding]\label{def:shallow-topminor}
  A \emph{topological minor embedding} of a graph~$H$ into a graph~$G$
  is a pair of functions~$\phi_V\colon V(H) \to V(G)$, $\phi_E\colon E(H) \to 2^{V(H)}$ where $\phi_V$ is injective
  and for every~$uv \in H$ we have that
  \begin{enumerate}
    \item $\phi_E(uv)$ is a path in~$G$ with endpoints~$\phi_V(u), \phi_V(v)$ and
    \item for every~$u'v' \in H$ with~$u'v'\neq uv$ the two paths
          $\phi_E(uv)$, $\phi_E(u'v')$ are internally vertex-disjoint. 
  \end{enumerate}
\end{definition}

\noindent
We define the \emph{depth} of the topological minor embedding~$\phi_V,\phi_E$
as the half-integer $(\max_{uv \in H} |\phi_E(uv)| - 1)/2$, \ie an embedding
of depth~$r$ will map the edges of~$H$ onto paths in~$G$ of length at most~$2r+1$.

Accordingly, if~$H$ has a topological minor embedding of depth~$r$ into~$G$
we say that $H$ is a an \emph{$r$-shallow topological minor} of~$G$ and
write~$H \stminor^r G$. Note that this relationship is monotone in the sense
that an $r$-shallow topological minor of~$G$ is also an~$r+1$-shallow topological minor of~$G$.

\begin{definition}[Grad and bounded expansion]
  For a graph~$G$ and an integer~$r \geq 0$, we define the \emph{topologically greatest
    reduced average density (top-grad) at depth~$r$} as
  \[
    \topgrad_r(G) = \max_{H \stminor^r G}  \frac{\|H\|}{|H|}.
  \]
  We extend this notation to graph classes as $\topgrad_r(\mc G) =
  \sup_{G \in \mc G} \topgrad_r(G)$.  A graph class $\mc G$ then has
  \emph{bounded expansion} if there exists a function $f \colon \mathbb{N} \to
  \mathbb{R}$ such that for all~$r$ we have that $\topgrad_r(\mc G) \leq f(r)$.
\end{definition}

\subsection{$r$-Centred Colourings and Weak $r$-Colouring Number}

\noindent
Equivalent definitions for classes of bounded expansion are related
to the $r$-centred colouring number and the weak $r$-colouring number of graphs.

\begin{definition}[$r$-centred colourings]
  An $r$-centred colouring of a graph $G$ is a vertex colouring such
  that, for any (induced) connected subgraph $H$, either some colour
  $c(H)$ colours exactly one node (a \emph{centre}) in $H$ or $H$ gets
  at least $r$ colours.
\end{definition}

\noindent
The minimum number of colours of an $r$-centred colouring of $G$ is denoted by $\chi_r(G)$. 
Let us see the characterisation of bounded expansion via $\chi_r$.

\begin{proposition}[\Nesetril, Ossona de Mendez~\cite{BndExpI}]\label{prop:low-td-colourings}
  Let~$\mc G$ be a graph class of bounded expansion. Then there
  exists a function~$f_c$ such that for every~$r \in \mathbb{N}$ and
  every~$G \in \mc G$ it holds that $\chi_r(G) \leq f_c(r)$.
\end{proposition}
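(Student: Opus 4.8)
The plan is to first recast $r$-centred colourings as \emph{low treedepth colourings} and then construct one from an \emph{augmented} orientation of $G$. Observe that a colouring $c$ is $r$-centred precisely when every connected subgraph receiving fewer than $r$ colours contains a uniquely-coloured vertex; equivalently, the union of any $p < r$ colour classes induces a subgraph $H$ in which every connected piece has a centre, and a colouring with this property certifies $\td(H) \le p$ (the minimum number of colours of such a centred colouring of $H$ is exactly its treedepth). Thus it suffices to produce, with a number of colours bounded in terms of $r$ and the expansion of $\mc G$, a colouring whose $<r$-coloured connected subgraphs always admit a centre.

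Bounded expansion gives in particular $\topgrad_0(\mc G) \le f(0)$, so every $G \in \mc G$ is $O(1)$-degenerate and admits an acyclic orientation $\vec G_0$ of in-degree at most $d_0$. I would then iterate \emph{transitive--fraternal augmentations}: from $\vec G_i$ build $\vec G_{i+1}$ by adding, for every directed path $u \to w \to v$, the \emph{transitive} arc $u \to v$, and for every pair $u,v$ with a common out-neighbour $w$ (arcs $u \to w$, $v \to w$), a \emph{fraternal} arc between $u$ and $v$, orienting the latter so as to keep $\vec G_{i+1}$ acyclic. Every arc of $\vec G_i$ is realised by a path of length at most $2^i$ in $G$, so the symmetrisation of $\vec G_i$ is an $O(2^i)$-shallow topological minor of $G$, \ie $\vec G_i \stminor^{O(2^i)} G$.

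The crux --- and the one place where bounded expansion is indispensable --- is the \emph{augmentation lemma}: there is a function $g$ such that the in-degrees of $\vec G_0, \vec G_1, \dots, \vec G_k$ can be kept bounded by $g(k)$, independently of $|G|$. This is a density argument. Since the arcs of $\vec G_i$ correspond to short paths in $G$, the number of fraternal pairs created at $w$ is at most $\binom{d^-(w)}{2}$ and the number of transitive arcs into $v$ is controlled by a product of two in-degrees, while the overall density $\|\vec G_i\|/|\vec G_i|$ is bounded by $\topgrad_{O(2^i)}(G)$, which is finite for a bounded-expansion class. Bounded average degree yields bounded degeneracy, and a fresh acyclic re-orientation restores a bounded in-degree before the next step; choosing the fraternal orientations and making this re-orientation counting tight against the grad bounds is the main technical obstacle, everything else being bookkeeping.

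Finally, for a target $r$ I would run the augmentation for a number of steps $k = k(r)$ depending only on $r$, chosen large enough that the transitive--fraternal reachability of $\vec G_k$ captures connectivity at the scale relevant to treedepth-$(r{-}1)$ subgraphs, and take a proper colouring of the symmetrisation of $\vec G_k$, which needs at most $g(k)+1$ colours by bounded degeneracy. This sets $f_c(r) := g(k(r)) + 1$. It then remains to verify that this colouring is $r$-centred: in a connected subgraph $H$ on fewer than $r$ colours, the transitive arcs make vertices adjacent to all their short-range descendants while the fraternal arcs join their competitors, so properness forces some colour to occur exactly once on $H$, giving a centre. Discharging this last verification --- \ie checking that the augmentation faithfully encodes the ancestor/treedepth structure --- completes the argument and yields Proposition~\ref{prop:low-td-colourings} with $f_c$ as above.
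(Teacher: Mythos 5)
This proposition is imported by the paper from~\cite{BndExpI} and is stated without proof, so there is nothing internal to compare against; what you have written is an attempted reconstruction of the proof in that reference. Your outline does follow the standard route: recast $r$-centred colourings via treedepth, orient $G$ with bounded in-degree using $\topgrad_0(\mc G)<\infty$, iterate transitive--fraternal augmentations, control the in-degrees of the augmented digraphs by the grads at depth roughly $2^i$, and finish with a proper colouring of a sufficiently deep augmentation. That is the correct skeleton and the correct place where bounded expansion enters.

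However, as written this is a proof plan rather than a proof: the two statements you explicitly defer are the entire mathematical content. First, the augmentation lemma (in-degree of $\vec G_{i+1}$ bounded in terms of the in-degree of $\vec G_i$ and the expansion function) is what you call ``the main technical obstacle'' and you do not discharge it; moreover your proposed fix of ``a fresh acyclic re-orientation'' of the whole digraph between steps would be wrong --- the orientations of previously added arcs must be preserved, since transitivity and fraternality are properties relative to the fixed orientation accumulated so far, and only the \emph{newly created} fraternal edges may be oriented afresh (using the degeneracy of the graph they form). Re-orienting everything destroys the invariant you need in the final step. Second, the verification that a proper colouring of the $k(r)$-th augmentation is $r$-centred is not ``bookkeeping'': it is a genuine induction (a connected subgraph on $p<r$ colours contains, via the transitive and fraternal arcs, a vertex whose colour class meets it exactly once, and one recurses after deleting it), and you give neither the choice of $k(r)$ nor the inductive argument. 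A minor further inaccuracy: the symmetrisation of $\vec G_i$ is a shallow \emph{minor} of $G$ at depth $O(2^i)$, not in general a shallow topological minor, since the realising paths share internal vertices; the density bound you need comes from the (non-topological) grad, which is fine for bounded expansion but should be stated correctly. To make this a proof you would need to supply the two deferred lemmas in full.
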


\noindent
Let $\Pi(G)$ be the set of linear orders on $V(G)$ and let $\preceq\;\in
\Pi(G)$. We represent $\preceq$ as an injective function $L\colon V(G) \rightarrow \mathbb{N}$ with the property that $v\preceq w$ if and only if $L(v) \le L(w)$.

A vertex $u$ is \emph{weakly $r$-reachable} from $v$ with respect to the order
$L$, if there is a path $P$ of length at most $r$ from $v$ to $u$ such
that $L(u) \le L(w)$ for all $w\in V(P)$.  Let $\Wreach_r[G,L, v]$ be the set
of vertices that are weakly $r$-reachable from $v$ with respect to~$L$.
The \emph{weak $r$-colouring number} $\wcol_r(G)$ is now defined as
\[
  \wcol_r(G)=\min_{L\in\Pi(G)}\max_{v\in V(G)}|\Wreach_r[G,L, v]|.
\]
For a set of vertices $X\subseteq V(G)$, we let 
\[
  \Wreach_{r}[G,L, X]=\bigcup\limits_{v\in X} \Wreach_{r}[G,L, v].
\]
Zhu \cite{WColBndExp} showed that a graph class has bounded
expansion if and only if the weak $r$-colouring number~$\wcol_r$
of every member is bounded by a function that only depends on~$r$.

\subsection{Neighbourhood Complexity}

\noindent
\begin{definition}[Neighbourhood complexity]
  For a graph~$G$ the \emph{$r$-neighbourhood complexity} is a function
  $\nu_r$ defined via
  \[
    \nu_r(G) := \max_{H \subgraph G, \emptyset\neq X \subseteq V(H)} \frac{|\{ N^r[v] \cap X \}_{v \in V(H)}|}{|X|}.
  \]
  We extend this definition to graph classes~$\mc G$ via 
  $\nu_r(\mc G) := \sup_{G \in \mc G} \nu_r(G)$.
\end{definition}

\noindent
Alternatively, we can define the neighbourhood complexity
via the index of an equivalence relation. This turns out to be 
a useful perspective in the subsequent proofs.
For~$r \in \mathbb N$ and~$X \subseteq V(G)$, we define the \emph{$(X,r)$-twin
equivalence} over~$V(G)$ as
\[
  u \simeq^{G,X}_r v \iff N^r[u] \cap X = N^r[v] \cap X
\]
which gives rise to the alternative definition
\[
  \nu_r(G) = \max_{H \subgraph G, \emptyset\neq X \subseteq V(G)}
            \frac{|V(H) / {\simeq^{H,X}_{r}} |}{|X|}.
\]
We will usually 
fix a graph in the following and hence omit the superscript~$G$
of this relation. Recall that we say that a graph class~$\mc G$ has \emph{bounded neighbourhood complexity}
if there exists a function~$f$ such that for every~$r$ it holds
that $\nu_r(\mc G) \leq f(r)$.

%   a88888b. 888888ba  d888888P  888888ba  888888ba  
%  d8'   `88 88    `8b    88     88    `8b 88    `8b 
%  88        88     88    88    a88aaaa8P' 88     88 
%  88        88     88    88     88   `8b. 88     88 
%  Y8.   .88 88     88    88     88     88 88    .8P 
%   Y88888P' dP     dP    dP     dP     dP 8888888P  
%                                                    
%        
\section{Neighbourhood Complexity and $r$-Centred Colourings}

\noindent
This section is dedicated to proving the following relation between the
$r$-neighbourhood complexity and the $(2r+2)$-centred colouring number of a graph.

\nbcentred*
% \begin{theorem}\label{thm:nb-bound-centred} 
%   For all graphs~$G$ and all integer~$r$ it holds that
%   \[
%     \nu_r(G) \leq 2^{\chi_{2r+2}(G)^{r+2}}.
%   \]
% \end{theorem}

\noindent 
For the remainder of this section, we fix a graph~$G$, a subset of vertices
$\emptyset\neq X\subseteq V(G)$, an integer~$r$ and a $(2r+2)$-centred colouring~$c \colon V(G)
\to [\xi]$ where $\xi = \chi_{2r+2}(G)$. We will assume that~$G$ and~$X$
are chosen such that
$|V(G) / {\simeq^{G,X}_r} | = \nu_r(G) \cdot |X|$. For readability we 
will drop the superscript~$G$ from~$\simeq^{G,X}_r$ in the following.

In the following we introduce a sequence of equivalence relations over~$V(G)$
and prove that they successively refine $\simeq^X_r$. To that end,
define~$\Sig_{\leq r}$  to be the set of all signatures over~$[\xi]$ of
length at most~$r$.
The subsequent lemmas will elucidate the connection between
centred colourings and proper signatures.

\begin{lemma}\label{lem:sigma-nb}
  For any \emph{proper} signature $\sig \in \Sig_{\leq r}$ and any
  vertices $u,v \in V(G)$, either $N^\sig(u) \cap
  N^\sig(v) = \emptyset$ or $N^\sig(u) = N^\sig(v)$. 
\end{lemma}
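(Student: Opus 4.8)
The plan is to prove the ``equal'' branch of the dichotomy directly: assuming $N^{\sig}(u) \cap N^{\sig}(v) \neq \emptyset$, I would show $N^{\sig}(u) \subseteq N^{\sig}(v)$, with equality then following by symmetry. So I fix a common endpoint $w \in N^{\sig}(u) \cap N^{\sig}(v)$, witnessed by $\sig$-paths $uP_1w$ and $vP_2w$, and take an arbitrary $w' \in N^{\sig}(u)$, witnessed by a $\sig$-path $uQw'$; the target is to produce a $\sig$-path from $v$ to $w'$. I may assume $u \neq v$ and $w' \neq w$, since otherwise $w' \in N^{\sig}(v)$ is immediate. Write $\ell = |\sig|$ and index the vertices of each path by $1,\dots,\ell$, so the vertex at index $i$ has colour $\sig[i]$; since $\sig$ is proper, Observation~\ref{obs:centre-intersect} guarantees that any vertex lying on two of these paths occupies the \emph{same} index on both, so an index is well defined for every vertex of the three paths.

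The heart of the argument is to locate a single vertex through which all three paths pass. Consider the connected subgraph $H$ induced on the union of the vertex sets of $P_1$, $P_2$ and $Q$ (these overlap in $u$ and in $w$, so $H$ is connected). Every vertex of $H$ has a colour among $\sig[1],\dots,\sig[\ell]$, and by properness a vertex of colour $\sig[i]$ can only sit at index $i$; hence $H$ uses at most $\ell \le r < 2r+2$ colours. The $(2r+2)$-centred colouring $c$ therefore forces a \emph{centre} in $H$, i.e.\ a colour $\sig[i^*]$ occurring on exactly one vertex of $H$. Now the index-$1$ vertices are $\{u,v\}$ and the index-$\ell$ vertices are $\{w,w'\}$, each a two-element set by our assumptions, so $\sig[1]$ and $\sig[\ell]$ occur twice and cannot be the centre. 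Consequently $2 \le i^* \le \ell-1$, and the unique colour-$\sig[i^*]$ vertex $m$ is forced to be the common index-$i^*$ vertex of all three paths.

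It then remains to splice. Concatenating the prefix $P_2[v,m]$ (colours $\sig[1]\dots\sig[i^*]$) with the suffix $Q[m,w']$ (colours $\sig[i^*]\dots\sig[\ell]$) yields a walk from $v$ to $w'$ whose colour sequence is exactly $\sig$. To see it is a path, observe that any vertex shared by the two segments has, again by Observation~\ref{obs:centre-intersect}, equal index on $P_2$ and on $Q$; this index is simultaneously $\le i^*$ and $\ge i^*$, so it equals $i^*$ and the vertex is $m$. Hence the segments meet only at $m$, the spliced path is simple, and $w' \in N^{\sig}(v)$, as required.

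The crux, and the step I expect to carry the weight, is the extraction of the common vertex $m$ from the centred colouring: this is precisely where the hypothesis that $c$ is $(2r+2)$-centred (rather than merely proper) is used, and it is what converts a bare overlap of two neighbourhoods into an actual rerouting. The remaining pieces are bookkeeping: properness together with Observation~\ref{obs:centre-intersect} supplies the well-defined index of each vertex and the internal disjointness for the splice, while the degenerate situations are harmless---$u=v$ and $w'=w$ are trivial, and short signatures (where the interior range $\{2,\dots,\ell-1\}$ is empty) simply re-derive the same contradiction, since a centre must exist yet cannot occupy index $1$ or $\ell$.
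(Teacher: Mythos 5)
Your proof is correct, but it reorganises the argument in a way that differs from the paper's. The paper argues by contradiction: it picks $x \in N^\sig(u)\cap N^\sig(v)$ and $y\in N^\sig(v)\setminus N^\sig(u)$, and then case-splits on whether $P_{vy}$ meets $P_{ux}$ --- if they meet, it splices at an arbitrary common vertex to contradict $y\notin N^\sig(u)$; if they are disjoint, the union of the three paths is a connected subgraph on at most $r$ colours in which every colour occurs at least twice, contradicting the $(2r+2)$-centred colouring. You instead give a direct proof of the inclusion $N^\sig(u)\subseteq N^\sig(v)$: you apply the centred-colouring property \emph{unconditionally} to the union of the three paths to extract a centre, observe that the centre colour must sit at an interior index (since $u\neq v$ and $w\neq w'$ double up the endpoints), conclude that all three paths pass through the unique vertex $m$ of that colour, and splice there. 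The two proofs use exactly the same two ingredients --- properness makes colour determine index and hence makes splices simple, and the centred colouring constrains the union of three $\sig$-paths --- but you use the colouring positively, to manufacture the splice point, whereas the paper uses it only negatively, to exclude the disjoint configuration. Your version buys a cleaner, case-free and non-contradiction argument (with the one proviso, which you correctly note, that for $|\sig|\leq 2$ the ``interior index'' is vacuous and the argument degenerates into showing the configuration cannot occur at all); the paper's version needs slightly less from the colouring in its first case, where an ordinary intersection of paths already suffices. Your handling of Observation~\ref{obs:centre-intersect}, of the degenerate cases, and of the internal disjointness of the spliced segments is all sound.
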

\begin{proof}
  Assume there exists $x \in N^\sig(u) \cap N^\sig(v)$ but $N^\sig(u) \neq N^\sig(v)$.
  Without loss of generality, let $y \in N^\sig(v) \setminus N^\sig(u)$.

  \begin{center}
    \includegraphics[scale=\figscale]{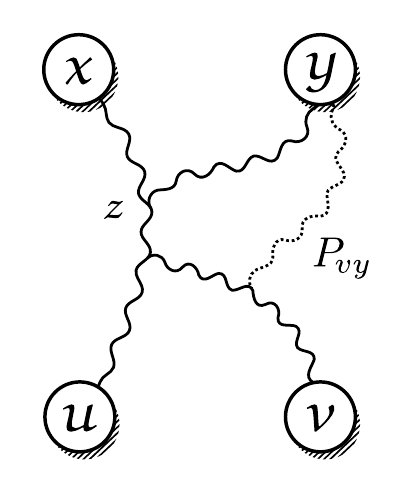}
  \end{center}

  \noindent
  Fix a $\sig$-path $P_{ux}$ and a $\sig$-path $P_{vx}$. Let $s \in P_{ux} \cap P_{vx}$
  be the first vertex in which both paths intersect (since both paths end in
  $x$, such a vertex   must exist). Further, fix a $\sig$-path $P_{vy}$. Now if
  $P_{vy} \cap P_{ux}$ is non-empty,   then $y$ is $\sig$-reachable from $u$: by
  Observation~\ref{obs:centre-intersect}, there would   be a vertex $z \in P_{vy}
  \cap P_{ux}$ that has the same index on both paths. Since $\sig$ is proper, the
  subpath of $P_{vy}[z,y]$ cannot share a vertex with $P_{ux}[u,z]$, thus we
  can construct a $\sig$-path by first taking the subpath~$P_{ux}[u,z]$
  and then the subpath~$P_{vy}[z,y]$. This path would mean
  that~$y \in N^\sig(u)$, contradicting our choice of~$y$.

  Hence, assume $P_{vy}$
  and $P_{ux}$ do not intersect. But then the graph $P_{ux} \cup P_{vx} \cup P_{vy}$ is
  connected and contains every colour of $\sig$ at least twice. Since
  $|\sig| \leq 2r+1$ this contradicts our assumption that the colouring~$c$
  is $(2r+2)$-centred.
\end{proof}

\noindent
We see that a single proper signature~$\sig$ imposes a very restricted
structure on the respective $\sig$-neighbourhoods in the graph. Even more interesting
is the interaction of proper signatures with each other. To that end, let us
introduce the notion of~\emph{$(X,\sig)$-equivalence}: vertices~$u$ and~$v$
are equivalent if their respective~$\sig$-neighbourhoods in~$X$ are the same, \ie
\[
  u \simeq^{X}_{\sig} v \iff  N^\sig(u) \cap X = N^\sig(v) \cap X.
\]

\begin{lemma}\label{lem:sigma-laminar}
  Let~$\sig_1,\sig_2$ be a pair of proper signatures. Let 
  further~$Y_{\sig_1,\sig_2} = N^{-\sig_1}(X) \cap N^{-\sig_2}(X)$ be 
  all vertices that can reach at least one vertex in~$X$ via a $\sig_1$-path
  and at least one vertex via a $\sig_2$-path.

  Fix two arbitrary equivalence classes~$C_{\sig_1} \in Y_{\sig_1,\sig_2} / {\simeq^X_{\sig_1}}$
  and~$C_{\sig_2} \in Y_{\sig_1,\sig_2} / {\simeq^X_{\sig_2}}$.
  Then either~$C_{\sig_1} \cap C_{\sig_2} = \emptyset$, $C_{\sig_1} \subseteq C_{\sig_2}$,
  or~$C_{\sig_1} \supseteq C_{\sig_2}$.
\end{lemma}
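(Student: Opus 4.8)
The plan is to argue by contradiction, assembling a single connected subgraph all of whose colours appear at least twice but which uses strictly fewer than $2r+2$ colours, contradicting the $(2r+2)$-centredness of $c$. First I would translate the equivalence classes into statements about full $\sigma_i$-neighbourhoods. Every vertex of $Y_{\sigma_1,\sigma_2}$ sends both a $\sigma_1$-path and a $\sigma_2$-path into $X$, so $N^{\sigma_i}(y)\cap X\neq\emptyset$ there; combined with Lemma~\ref{lem:sigma-nb} (the $\sigma_i$-neighbourhoods are pairwise equal or disjoint) this shows that on $Y_{\sigma_1,\sigma_2}$ the relation $\simeq^{X}_{\sigma_i}$ is equality of the neighbourhood $N^{\sigma_i}(\cdot)$ itself: equivalent vertices have equal $\sigma_i$-neighbourhoods, inequivalent ones have disjoint $\sigma_i$-neighbourhoods. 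Assuming the conclusion fails gives $w\in C_{\sigma_1}\cap C_{\sigma_2}$, $u\in C_{\sigma_1}\setminus C_{\sigma_2}$ and $v\in C_{\sigma_2}\setminus C_{\sigma_1}$. Writing $S_1=N^{\sigma_1}(w)$ and $S_2=N^{\sigma_2}(w)$, the translation yields $N^{\sigma_1}(u)=S_1$ and $N^{\sigma_2}(v)=S_2$, whereas $N^{\sigma_1}(v)=:S_1'$ is disjoint from $S_1$ and $N^{\sigma_2}(u)=:S_2'$ is disjoint from $S_2$. As all four neighbourhoods meet $X$, I pick targets $x_1\in S_1\cap X$, $x_1'\in S_1'\cap X$, $x_2\in S_2\cap X$, $x_2'\in S_2'\cap X$, noting $x_1\neq x_1'$ and $x_2\neq x_2'$ by disjointness.

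The geometric heart is a disjointness claim: for a proper signature $\sigma$ and vertices $a,b$ with $N^{\sigma}(a)\cap N^{\sigma}(b)=\emptyset$, a $\sigma$-path from $a$ to some $x_a\in N^{\sigma}(a)$ and a $\sigma$-path from $b$ to some $x_b\in N^{\sigma}(b)$ are vertex-disjoint. Indeed, a shared vertex $z$ sits at a common index $i$ on both paths (its colour $\sigma[i]$ is unique in $\sigma$), and splicing the prefix $a\to z$ of the first onto the suffix $z\to x_b$ of the second gives, by properness, a simple $\sigma$-path from $a$ to $x_b$; hence $x_b\in N^{\sigma}(a)\cap N^{\sigma}(b)=\emptyset$, a contradiction. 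Applying this with $\sigma=\sigma_1$ to $u,v$ (whose $\sigma_1$-neighbourhoods $S_1,S_1'$ are disjoint) makes $P_1:=(u\to x_1)$ and $P_1':=(v\to x_1')$ vertex-disjoint $\sigma_1$-paths, and with $\sigma=\sigma_2$ to $u,v$ makes $P_2':=(u\to x_2')$ and $P_2:=(v\to x_2)$ vertex-disjoint $\sigma_2$-paths.

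I then fix $\sigma_1$-path $Q_1:=(w\to x_1)$ and $\sigma_2$-path $Q_2:=(w\to x_2)$ and form
\[
  H = P_1 \cup P_1' \cup P_2 \cup P_2' \cup Q_1 \cup Q_2 .
\]
This is connected: $P_1$ and $P_2'$ share $u$, $P_1'$ and $P_2$ share $v$, and the bridge $P_1 - Q_1 - Q_2 - P_2$ (meeting at $x_1$, $w$, $x_2$) joins the $u$-side to the $v$-side. Every colour of $H$ lies in $\sigma_1\cup\sigma_2$; each colour of $\sigma_1$ occurs once on $P_1$ and once on the disjoint path $P_1'$, hence at two distinct vertices, and symmetrically each colour of $\sigma_2$ occurs twice via the disjoint pair $P_2,P_2'$, so $H$ has no centre. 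On the other hand $u$ is the common source of a $\sigma_1$-path and a $\sigma_2$-path, so $\sigma_1[1]=\sigma_2[1]=c(u)$ and the number of colours in $H$ is at most $|\sigma_1|+|\sigma_2|-1$, which the length bound on the signatures keeps strictly below $2r+2$. A connected subgraph with fewer than $2r+2$ colours and no centre contradicts the $(2r+2)$-centredness of $c$, which finishes the argument.

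The hard part will be the second paragraph's routing idea: one must send $u$ and $v$ to the \emph{disjoint} targets $x_1,x_1'$ (respectively $x_2',x_2$) rather than to their shared targets, because it is exactly the disjointness of the neighbourhoods that yields vertex-disjoint same-signature paths and thereby forces each colour to appear twice; the vertex $w$ and its paths $Q_1,Q_2$ serve only to glue the two sides into one connected graph without introducing new colours. A secondary subtlety is keeping the colour count under $2r+2$, which relies on observing that $\sigma_1$ and $\sigma_2$ necessarily share their first colour $c(u)$.
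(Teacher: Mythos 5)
Your proposal is correct and follows essentially the same route as the paper's proof: pick $u\in C_{\sig_1}\setminus C_{\sig_2}$, $v\in C_{\sig_2}\setminus C_{\sig_1}$, $w\in C_{\sig_1}\cap C_{\sig_2}$, use Lemma~\ref{lem:sigma-nb} to turn class-inequality into disjointness of full $\sig_i$-neighbourhoods, route vertex-disjoint same-signature path pairs to witnesses in $X$, glue everything through $w$, and contradict $(2r+2)$-centredness. The only cosmetic difference is that you re-derive the path-disjointness splicing step explicitly (the paper cites Lemma~\ref{lem:sigma-nb} for it) and you additionally observe $\sig_1[1]=\sig_2[1]$, which is true but not needed since $|\sig_1|+|\sig_2|\leq 2r<2r+2$ already suffices.
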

\begin{proof}
  The statement is trivial if $\sig_1 = \sig_2$ or $C_{\sig_1} = C_{\sig_2}$. Otherwise, assume
  that there exist $C_{\sig_1} \neq C_{\sig_2}$ such that indeed $C_{\sig_1}$ and
  $C_{\sig_2}$ are not related in the three above ways. Since this is
  impossible when $|C_{\sig_1}| = 1$ or $|C_{\sig_2}| = 1$, we know that
  there exists vertices $u,v,w \in Y_{\sig_1,\sig_2}$ 
  with $u \in C_{\sig_1} \setminus C_{\sig_2}$, $v \in C_{\sig_2}
  \setminus C_{\sig_1}$ and $w \in C_{\sig_1} \cap C_{\sig_2}$.

  The respective membership in these classes tell us the following
  about the vertices $u,v,w$: 
  \begin{align*}
    N^{\sig_1}(u) \cap X &= N^{\sig_1}(w) \cap X \neq N^{\sig_1}(v) \cap X \quad \text{and} \\
    N^{\sig_2}(u) \cap X &\neq N^{\sig_2}(w) \cap X = N^{\sig_2}(v) \cap X.
  \end{align*}
  Using Lemma~\ref{lem:sigma-nb} we can strengthen this
  statement: $N^{\sig_1}(u) \cap N^{\sig_1}(v) = \emptyset$ and $N^{\sig_2}(u)
  \cap N^{\sig_2}(v) = \emptyset$ and since $u,v,w$ are contained 
  in $Y_{\sig_1,\sig_2}$, we know that all the involved neighbourhoods intersect~$X$.

  Therefore, we can pick distinct vertices $x_1,y_1,x_2,y_2 \in X$ such that
  $x_1 \in N^{\sig_1}(u)$, $y_1 \in N^{\sig_1}(v)$ and $x_2 \in
  N^{\sig_2}(u)$, $y_2 \in N^{\sig_2}(v)$.

  \begin{center}
    \includegraphics[scale=\figscale]{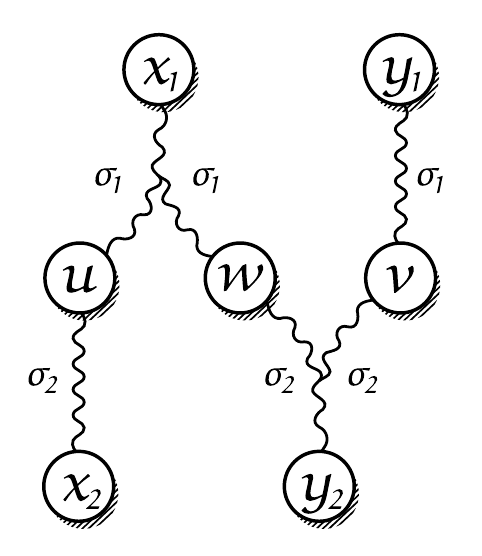}
  \end{center}

  \noindent
  Since $N^{\sig_1}(w) = N^{\sig_1}(u)$, we can connect the vertices $u,w$
  with two (not necessarily disjoint) $\sig_1$-paths $P^{\sig_1}_u,
  P^{\sig_1}_w$ that start both in $x_1$. Further, there exists a
  $\sig_1$-path $P^{\sig_1}_v$ from $y_1$ to $v$. If $P^{\sig_1}_v$ would
  intersect either $P^{\sig_1}_u$ or $P^{\sig_1}_w$, we could not have that
  $N^{\sig_1}(v) \cap N^{\sig_1}(u) =
  \emptyset$ according to Lemma~\ref{lem:sigma-nb}. We conclude that indeed
  $P^{\sig_1}_v$ is disjoint from both $P^{\sig_1}_u$ and $P^{\sig_1}_w$.

  We repeat the same construction for $x_2,y_2$ and the signature
  $\sig_2$ to obtain paths $P^{\sig_2}_u, P^{\sig_2}_v, P^{\sig_2}_w$. 
  This time, $P^{\sig_2}_u$ is necessarily disjoint from both $P^{\sig_2}_v$ and $P^{\sig_2}_w$
  (\cf figure above). 
  We reach a contradiction: observe that the graph induced by
  the paths 
  $P^{\sig_1}_u, P^{\sig_1}_v, P^{\sig_1}_w,
  P^{\sig_2}_u, P^{\sig_2}_v, P^{\sig_2}_w$ is connected, 
  contains every colour of $\sig_1 \cup \sig_2$ at least twice
  and in total at most $2r+1$ colours. This is impossible if
  $c$ was indeed $(2r+2)$-centred.
\end{proof}

\noindent
For the next lemma we extend the notion of~$(X,\sig)$-equivalence to
sets of proper signatures~$\Sig$. We define the
$(X,\Sig)$-equivalence relation on the vertices of $G$ as
follows:
\[
  u \simeq^{X}_{\Sig} v \iff \text{for all } \sig \in \Sig,
  N^\sig(u) \cap X = N^\sig(v) \cap X
\]

\begin{lemma}\label{lem:proper-sigma-complexity}
  Let~$\hat \Sig \subseteq \Sig_{\leq r}$ be a set of proper signatures and 
  let $W_{\hat \Sig} = \bigcap_{\sig \in \hat \Sig} N^{-\sig}(X)$ be those vertices
  in~$G$ which have a non-empty $\sig$-neighbourhood in~$X$ for every~$\sig \in \hat \Sig$.
  Then $|W_{\hat \Sig} / {\simeq^X_{\hat \Sig}}| \leq |\hat \Sig| \cdot |X|$. 
\end{lemma}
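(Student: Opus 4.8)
The plan is to control the common refinement $\simeq^X_{\hat\Sig}$ by separating two concerns: each individual relation $\simeq^X_\sig$ has at most $|X|$ classes, and the classes coming from \emph{different} signatures interlock into a laminar family. The naive bound on a common refinement of $|\hat\Sig|$ partitions, each with up to $|X|$ classes, would be the product $|X|^{|\hat\Sig|}$; the whole point is that laminarity collapses this to the additive bound $|\hat\Sig|\cdot|X|$.

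First I would dispose of a single proper signature $\sig\in\hat\Sig$. By Lemma~\ref{lem:sigma-nb} the sets $N^\sig(u)$ and $N^\sig(v)$ are, for any two vertices, either disjoint or equal, so the distinct traces $N^\sig(v)\cap X$ are pairwise disjoint subsets of $X$; moreover, for $v\in W_{\hat\Sig}\subseteq N^{-\sig}(X)$ each such trace is non-empty. Pairwise disjoint non-empty subsets of $X$ number at most $|X|$, which yields $|W_{\hat\Sig}/{\simeq^X_\sig}|\le |X|$.

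Next I would build the global object. Because $W_{\hat\Sig}\subseteq Y_{\sig_1,\sig_2}$ for every pair $\sig_1,\sig_2\in\hat\Sig$, Lemma~\ref{lem:sigma-laminar} guarantees that a $\simeq^X_{\sig_1}$-class and a $\simeq^X_{\sig_2}$-class are always disjoint or nested; restricting both to $W_{\hat\Sig}$ preserves this, and two classes of the same relation are disjoint outright. Hence $\mathcal L:=\bigcup_{\sig\in\hat\Sig}\bigl(W_{\hat\Sig}/{\simeq^X_\sig}\bigr)$ is a laminar family on $W_{\hat\Sig}$, and by the previous paragraph $|\mathcal L|\le |\hat\Sig|\cdot|X|$.

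The crux — and the step I expect to carry the weight — is reading the common refinement off $\mathcal L$ without incurring the product blow-up. Fix $x\in W_{\hat\Sig}$ and consider the $|\hat\Sig|$ classes containing it, one per signature. They all contain $x$, and any two members of a laminar family that share a point must be nested, so these classes form a chain under inclusion; their intersection — which is exactly the $\simeq^X_{\hat\Sig}$-class of $x$ — is therefore the smallest element of that chain, and in particular is itself a member of $\mathcal L$. Thus every class of $\simeq^X_{\hat\Sig}$ is a set in $\mathcal L$, and since the classes of a partition are distinct this is an injection $W_{\hat\Sig}/{\simeq^X_{\hat\Sig}}\hookrightarrow\mathcal L$. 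I would then conclude $|W_{\hat\Sig}/{\simeq^X_{\hat\Sig}}|\le|\mathcal L|\le|\hat\Sig|\cdot|X|$, as required. The only delicate observation is the "shared point forces nesting" fact for laminar families; granting that, the rest is bookkeeping.
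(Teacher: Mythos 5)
Your proof is correct and follows essentially the same route as the paper's: form the laminar family $\bigcup_{\sig\in\hat\Sig}\bigl(W_{\hat\Sig}/{\simeq^X_\sig}\bigr)$ via Lemma~\ref{lem:sigma-laminar}, observe that each class of the common refinement is an intersection of nested members and hence itself a member of the family, and bound each single-signature partition by $|X|$ using Lemma~\ref{lem:sigma-nb}. You merely spell out two steps the paper leaves implicit (the pairwise-disjointness of the traces and the chain argument for the intersection), which is fine.
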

\begin{proof} 
  Define the set family~$\cal F := \bigcup_{\sig \in \hat \Sig}
  (W_{\hat \Sig} / {\simeq^X_{\sig}})$ of the classes of all
  equivalence relations defined via
  a signature contained in~$\hat \Sig$.
  By Lemma~\ref{lem:sigma-laminar} and our
  choice of~$W_{\hat \Sig}$, every pair 
  $B_1, B_2 \in \cal F$ satisfies $B_1 \cap B_2 \in \{ \emptyset, B_1, B_2 \}$
  (\ie $\cal F$ is a \emph{laminar} family). 

  Consider a class~$B \in  W_{\hat \Sig}/{\sigeq_{\hat \Sig}}$. Then~$B$ 
  is the result of a intersection of at most~$|\hat \Sig|$ classes in~$\cal F$.
  Since~$B \neq \emptyset$ and~$\cal F$ is laminar, it follows that 
  $B \in \cal F$. We conclude that 
  \begin{align*}
    |W_{\hat \Sig} / {\simeq^X_{\hat \Sig}}| \leq |\cal F| \leq |\hat \Sig| \cdot |X|,
  \end{align*}
  where the second inequality follows from Lemma~\ref{lem:sigma-nb}.\qedhere
\end{proof}

\noindent
In order to apply the above lemma it is left to bound the number of possible
$r$-neighbourhoods in~$X$ by $\sigma$-neighbourhoods of \emph{proper}
signatures. We establish this bound by successively refining the $(X,r)$-twin
equivalence. The following figure gives an overview over the proof (using
relations yet to be introduced).

\begin{comment}
\def\nudge{.4em}
\begin{align*}
  u &\nbeq_{r-1} v           &&\iff &
                              N^{r-1}[u] \cap X &= N^{r-1}[v] \cap X \\
  &\hspace{\nudge}\Big\Uparrow \small\mathrlap{
    ~\text{Lemma~\ref{lemma:struct-nb-equiv}}} \\
%  
  u &\snbeq_{r-1} v   &&\iff &
                              \big(N^i(u)\cap X \big)_{0 \leq i < r} 
                                    &= \big(N^i(v)\cap X \big)_{0 \leq i < r} \\
%                                    
  &\hspace{\nudge}\Big\Uparrow \small\mathrlap{
        ~\text{Lemma~\ref{lemma:sig-nb-equiv}}} \\
  u &\rsigeq_r v    &&\iff &
                              \big(N^\sig(u) \cap X \big)_{\sig \in \Sig_{\leq r}} 
                                    &= \big(N^\sig(v) \cap X \big)_{\sig \in \Sig_{\leq r}}\\
%                                    
  &\hspace{\nudge}\Big\Uparrow \small\mathrlap{~\text{Lemma~\ref{lemma:sig-prop-nb-equiv}}} \\
  u &\prsigeq_r v     &&\iff &
                               \big(N^{\sig}(u^1) \cap X^{|\sig|} \big)_{\sig \in \hat \Sig_{\leq r}} 
                                  &=  \big(N^{\sig}(v^1) \cap X^{|\sig|} \big)_{\sig \in \hat \Sig_{\leq r}} 
\end{align*}
\end{comment}

\def\nudge{.4em}
\begin{align*}
  u &\nbeq_{r-1} v           &&\iff &
                              N^{r-1}[u] \cap X &= N^{r-1}[v] \cap X \\
  &\hspace{\nudge}\Big\Uparrow \small\mathrlap{
        ~\text{Lemma~\ref{lemma:sig-nb-equiv}}} \\
  u &\rsigeq_r v    &&\iff &
                              \big(N^\sig(u) \cap X \big)_{\sig \in \Sig_{\leq r}} 
                                    &= \big(N^\sig(v) \cap X \big)_{\sig \in \Sig_{\leq r}}\\
  &\hspace{\nudge}\Big\Uparrow \small\mathrlap{~\text{Lemma~\ref{lemma:sig-prop-nb-equiv}}} \\
  u &\prsigeq_r v     &&\iff &
                               \big(N_{\hat G}^{\hat \sig}(u^1) \cap X^{|\hat \sig|} \big)_{\hat \sig \in \hat \Sig_{\leq r}} 
                                  &=  \big(N_{\hat G}^{\hat \sig}(v^1) \cap X^{|\hat \sig|} \big)_{\hat \sig \in \hat \Sig_{\leq r}} 
\end{align*}

\noindent
Where the last relation is defined with the help of an
auxiliary graph~$\hat G$ and signature set~$\hat \Sig_{\leq r}$ whose
construction is described later. The bound on the index of this last relation
will prove Theorem~\ref{thm:nb-bound-centred}.

\begin{lemma}\label{lemma:sig-nb-equiv}
  The equivalence relation~$\rsigeq_r$ over~$V(G)$ defined via
  \[
    u \rsigeq_r  v \iff
    \big(N^\sig(u) \cap X \big)_{\sig \in \Sig_{\leq r}} 
    = \big(N^\sig(v) \cap X \big)_{\sig \in \Sig_{\leq r}} 
  \] 
  is a refinement of~$\nbeq_{r-1}$.
\end{lemma}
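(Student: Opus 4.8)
The plan is to prove the refinement directly: assuming $u \rsigeq_r v$, I would show $N^{r-1}[u]\cap X = N^{r-1}[v]\cap X$, which is exactly $u \nbeq_{r-1} v$. The single structural fact driving the whole argument is that the closed $(r-1)$-neighbourhood of any vertex decomposes as the union of its $\sig$-neighbourhoods over all signatures of length at most~$r$, that is,
\[
  N^{r-1}[w] \;=\; \bigcup_{\sig \in \Sig_{\leq r}} N^\sig(w)
  \qquad\text{for every } w \in V(G).
\]
Once this identity is available the lemma is immediate, so essentially all of the work goes into verifying it.

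To verify the identity I would check two inclusions. For the inclusion from right to left, if $x \in N^\sig(w)$ for some $\sig$ of length $\ell \le r$, then by definition there is a $w$–$x$ path carrying the signature $\sig$, hence a path on $\ell \le r$ vertices and thus of length at most $r-1$; therefore $\dist(w,x) \le r-1$ and $x \in N^{r-1}[w]$. For the converse, if $x \in N^{r-1}[w]$ then a shortest $w$–$x$ path has at most $r$ vertices, so colouring its vertices by $c$ yields a signature $\sig$ of length at most $r$ witnessing $x \in N^\sig(w)$; the extreme case $x = w$ is covered by the length-one signature $(c(w))$, matching $N^0(w) = \{w\}$. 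Intersecting the identity with $X$ and invoking the hypothesis $u \rsigeq_r v$, which gives $N^\sig(u)\cap X = N^\sig(v)\cap X$ for each individual $\sig \in \Sig_{\leq r}$, I would then conclude
\[
  N^{r-1}[u]\cap X
  = \bigcup_{\sig \in \Sig_{\leq r}} \bigl(N^\sig(u)\cap X\bigr)
  = \bigcup_{\sig \in \Sig_{\leq r}} \bigl(N^\sig(v)\cap X\bigr)
  = N^{r-1}[v]\cap X,
\]
which establishes $u \nbeq_{r-1} v$.

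I expect no genuinely hard step here: the only thing to get right is the index bookkeeping relating the radius $r-1$ of the closed neighbourhood, the number of edges of a witnessing path, and the length of its signature (the number of vertices on the path), which is exactly what produces the shift between $\Sig_{\leq r}$ and $N^{r-1}[\cdot]$. Everything else is a routine set-equality manipulation. It is worth noting that this lemma uses no property of the $(2r+2)$-centred colouring whatsoever — it is purely combinatorial and holds for any vertex colouring $c$; the centredness assumption only enters in the earlier structural lemmas on proper signatures.
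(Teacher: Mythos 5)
Your proof is correct and follows essentially the same route as the paper's: both hinge on the identity $N^{r-1}[w]\cap X=\bigcup_{\sig\in\Sig_{\leq r}}(N^\sig(w)\cap X)$, which the paper states as a chain of equivalences and you verify in slightly more detail. Your closing observation that the argument uses no property of the centred colouring is also accurate.
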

\begin{proof}
  Assume $u \rsigeq_r v$. We need to prove that $N^{r-1}[u] \cap X = N^{r-1}[v] \cap X$.  The equivalence of~$u$ and~$v$ implies that
  % We want to align 
  %\settowidth{\templen}{\phantom{$\implies$}}  following
  \begin{align*}
  %  \big( N^\sig(u) \cap X \big)_{\sig \in \Sig_{=i}} &\makebox[\templen]{$=$}
   % \big( N^\sig(v) \cap X \big)_{\sig \in \Sig_{=i}}  
  %\intertext{
  %for every~$1 \leq i \leq r$. Therefore we have that
  %}
    w \in N^{r-1}[v] \cap X &\iff \exists \sig \in \Sig_{\leq r}\colon w \in N^{\sig}(v) \cap X \\
    &\iff \exists \sig \in \Sig_{\leq r}\colon w \in N^{\sig}(u) \cap X \\
    &\iff w \in N^{r-1}[u] \cap X.\qedhere
  \end{align*}
\end{proof}

\noindent
We now construct an auxiliary graph and colouring as follows: Let~$\hat G = G
\lexprod \comp K_r$. 
Assuming that~$V(K_r) = [r]$ and hence $V(\hat G) = V(G) \times [r]$, we will
use the shorthand~$v^i = (v,i)$ for $v \in V(G), i \in [r]$ and call~$v^i$ the
\emph{$i$th copy} of~$v$. Using this notation, we define a colouring $\hat c\colon V(\hat G) \to [\xi] \times [r]$
of~$\hat G$ via $\hat c(v^i) = (c(v),i)$.
Note that~$\hat c$ is a $(2r+2)$-centred colouring
of~$\hat G$: any connected subgraph~$\hat H \subseteq \hat G$ with less
than~$2r+2$ colours and no centre would directly imply that the subgraph~$H
\subseteq G$ with vertex set~$V(H) = \bigcup_{1 \leq i \leq r} \{ v \in V(G) \mid
v^i \in \hat H \}$ contains at most~$2r+2$ colours and no centre, contradicting
our choice of~$c$. 

For a signature~$\sig \in \Sig_{\leq r}$ we define the proper signature
$\hat \sig = ( (\sig[i],i))_{1\leq i \leq |\sig|}$. Accordingly,
we define the set of proper signatures~$\hat \Sig_{\leq r}$ over colours $[\xi]
\times [r]$ as $\hat \Sig_{\leq r} = \{\hat \sig \mid \sig \in \Sig_{\leq r}
\}$.
The following lemma
connects the sigma-equivalence~$\rsigeq_r$ over $V(G)$ with a suitable
equivalence defined over the above auxiliary structure.

\begin{lemma}\label{lemma:sig-prop-nb-equiv}
  The equivalence relation $\prsigeq_r$ over~$V(G)$ defined via
  \[
      u \prsigeq_r v   \iff 
              \big(N_{\hat G}^{\hat \sig}(u^1) \cap X^{|\hat \sig|} \big)_{\hat \sig \in \hat \Sig_{\leq r}} 
              =  \big(N_{\hat G}^{\hat \sig}(v^1) \cap X^{|\hat \sig|} \big)_{\hat \sig \in \hat \Sig_{\leq r}}
  \]
  is a refinement of~$\rsigeq_r$ where~$X^i := \{ v^i \mid v \in X\}$.
\end{lemma}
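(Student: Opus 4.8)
The plan is to prove the stronger statement that $\prsigeq_r$ and $\rsigeq_r$ are in fact the \emph{same} equivalence relation, from which the claimed refinement is immediate. The engine behind this is a length-preserving correspondence between $\sig$-neighbourhoods in~$G$ and $\hat\sig$-neighbourhoods in~$\hat G$. Concretely, for every vertex~$z \in V(G)$ and every $\sig \in \Sig_{\leq r}$ with $m := |\sig|$, I would establish that $w \mapsto w^m$ restricts to a bijection
\[
  N^\sig(z) \cap X \;\longleftrightarrow\; N_{\hat G}^{\hat\sig}(z^1) \cap X^{m}.
\]
Since $\sig \mapsto \hat\sig$ is itself a length-preserving bijection between~$\Sig_{\leq r}$ and~$\hat\Sig_{\leq r}$, these componentwise bijections identify the two tuple-valued invariants; hence $u \prsigeq_r v$ holds exactly when $u \rsigeq_r v$ does, which in particular yields the refinement.

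First I would record the relevant structure of~$\hat G = G \lexprod \comp K_r$: because $\comp K_r$ is edgeless, two copies $(x,i)$ and $(y,j)$ are adjacent in~$\hat G$ if and only if $xy \in E(G)$, so distinct copies of the \emph{same} vertex are never adjacent, while all copies of two $G$-adjacent vertices are mutually adjacent. I would also note that each $\hat\sig = ((\sig[i],i))_i$ is proper (its second coordinates are pairwise distinct) with $|\hat\sig| = |\sig|$, and that $\hat c(x^i) = (c(x),i)$ equals $\hat\sig[k]$ precisely when the copy index is~$k$ and $c(x) = \sig[k]$.

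For the forward inclusion, given a $\sig$-path $z = p_1 \ldots p_m = w$ in~$G$, I would lift it to $p_1^1, p_2^2, \ldots, p_m^m$ in~$\hat G$: its vertices are pairwise distinct (distinct copy indices), consecutive ones are adjacent (as $p_k p_{k+1} \in E(G)$), and $\hat c(p_k^k) = (c(p_k),k) = \hat\sig[k]$, so it is a $\hat\sig$-path from $z^1$ to $w^m$; thus $w \in X$ gives $w^m \in N_{\hat G}^{\hat\sig}(z^1) \cap X^m$. For the reverse inclusion, any $\hat\sig$-path $z^1 = x_1 \ldots x_m = w^m$ must have $x_k$ of colour $\hat\sig[k] = (\sig[k],k)$, which forces $x_k = (w_k,k)$ with $c(w_k) = \sig[k]$, $w_1 = z$, $w_m = w$, while $x_k \sim x_{k+1}$ forces $w_k w_{k+1} \in E(G)$; the projection $w_1 \ldots w_m$ then witnesses $w \in N^\sig(z)$. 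Together with $w \in X$ this closes the bijection, and injectivity of $w \mapsto w^m$ is clear.

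The delicate point — the one I would treat most carefully — is exactly this reverse projection. Because the copy indices keep the lifted vertices distinct, the projection $w_1 \ldots w_m$ need \emph{not} be vertex-disjoint: whenever $w_k = w_{k'}$ for $k \neq k'$ we necessarily have $\sig[k] = \sig[k']$, i.e.\ $\sig$ is non-proper. This is in fact the whole purpose of passing to~$\hat G$: it converts the colour-repeating (non-proper) signatures of~$G$ into \emph{proper} signatures~$\hat\sig$ of~$\hat G$, to which the laminarity machinery of Lemma~\ref{lem:sigma-nb}, Lemma~\ref{lem:sigma-laminar}, and Lemma~\ref{lem:proper-sigma-complexity} applies (this is what the subsequent index bound will exploit). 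The correspondence above is thus exactly as tight as the reading of $N^\sig(\cdot)$ as collecting the endpoints of $\sig$-labelled walks; under that reading the two inclusions match and the bijection — hence the refinement — follows. I expect the verification of this reverse direction, and the bookkeeping that the per-signature bijections assemble into an equality of the two tuple-valued invariants, to be the only non-routine part of the argument.
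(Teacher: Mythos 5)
Your proposal follows essentially the same route as the paper's proof: lift a $\sig$-path $u\,p_2\ldots p_m$ of $G$ to the $\hat\sig$-path $u^1 p_2^2\ldots p_m^m$ of $\hat G$ and project back, so that the two tuple-valued invariants are identified signature by signature. The paper's own proof only spells out the projection direction and leaves the lift implicit, whereas you make both inclusions explicit --- and both really are needed, since the refinement argument runs $w\in N^{\sig}(u)\cap X\Rightarrow w^m\in N^{\hat\sig}_{\hat G}(u^1)\cap X^m=N^{\hat\sig}_{\hat G}(v^1)\cap X^m\Rightarrow w\in N^{\sig}(v)\cap X$. The one substantive point where you go beyond the paper is the caveat you flag about the reverse step: the projection of a $\hat\sig$-path onto $G$ may repeat vertices exactly when $\sig$ repeats colours, so for non-proper $\sig$ it is a $\sig$-\emph{walk} rather than a $\sig$-path, and it cannot in general be shortcut to a $\sig$-path (a triangle coloured $1,2,1$ with $\sig=(1,2,1,2)$ already shows this). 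The paper's proof asserts without comment that the projection ``is a $\sig$-path'', so it has the same issue; your resolution --- read $N^{\sig}(\cdot)$ as walk-reachability --- is the correct and harmless fix, since Lemmas~\ref{lem:sigma-nb}--\ref{lem:proper-sigma-complexity} only ever touch proper signatures (where walks and paths coincide, as all colours and hence all vertices on a $\sig$-walk are distinct) and Lemma~\ref{lemma:sig-nb-equiv} is insensitive to the change because $N^{r-1}[\cdot]$ does not distinguish walks from paths. In short: same approach, correct modulo a definitional subtlety that you handle more carefully than the source.
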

\begin{proof}
  Assume $u \prsigeq_r v$. Then for every signature~$\hat \sig \in \Sig_{\leq r}$
  we have that $N_{\hat G}^{\hat \sig}(u^1) \cap X^{|\hat \sig|} = N_{\hat G}^{\hat \sig}(v^1)
  \cap X^{|\hat \sig|}$. Now note that if~$w^{|\hat \sig|}$ is~$\hat \sig$-reachable
  from~$u^1$ in~$\hat G$, then~$w$ is~$\sig$-reachable from~$u$ in~$G$:
  if~$u^1 x_2^2 \ldots x_{{|\hat \sig|-1}}^{|\hat \sig|-1} w^{|\hat \sig|}$ is a $\hat \sig$-path
  in~$\hat G$, then~$u x_1 \ldots x_{|\hat \sig|-1} w$ is, by construction of~$\hat \sig$,
  a $\sig$-path in~$G$.

  Accordingly
  $w^{|\hat \sig|} \in N^{\hat \sig}(u^1)$ implies that~$w \in N^{\sig}(u)$.
  We conclude that therefore $N^{\sig}(u) \cap X^{|\sig|} = N^{\sig}(v) \cap X^{|\sig|}$
  and thus $u \rsigeq_r v$.
\end{proof}

\begin{lemma}\label{lemma:sig-bound}
  $| V(G) / {\prsigeq_r} | \leq   r2^{\xi^{r+1}} \cdot |X|.$
\end{lemma}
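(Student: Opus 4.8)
The plan is to read the relation~$\prsigeq_r$ as an ordinary $(\hat X,\hat\Sig_{\leq r})$-equivalence living inside the auxiliary graph~$\hat G = G \lexprod \comp K_r$, and then to count its classes with the laminarity machinery already established in Lemmas~\ref{lem:sigma-nb}--\ref{lem:proper-sigma-complexity}. The first thing I would record is that a $\hat\sig$-path leaving~$u^1$ lands \emph{entirely} on level~$|\hat\sig|$, so that $N_{\hat G}^{\hat\sig}(u^1) \cap X^{|\hat\sig|} = N_{\hat G}^{\hat\sig}(u^1) \cap \hat X$, where $\hat X := \bigcup_{i \in [r]} X^i$ has size~$r|X|$. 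Hence $u \prsigeq_r v$ holds exactly when $u^1$ and $v^1$ agree, for every proper signature~$\hat\sig \in \hat\Sig_{\leq r}$, on their $\hat\sig$-neighbourhood inside~$\hat X$; in other words $\prsigeq_r$ is the restriction to the level-one copies of the $(\hat X,\hat\Sig_{\leq r})$-equivalence on~$\hat G$. Since~$\hat c$ is a $(2r+2)$-centred colouring of~$\hat G$ and every~$\hat\sig$ is proper, Lemmas~\ref{lem:sigma-nb}, \ref{lem:sigma-laminar} and~\ref{lem:proper-sigma-complexity} apply verbatim to the triple~$\hat G,\hat c,\hat X$.

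Next I would stratify the vertices by their \emph{support} $S(u) := \{\hat\sig \in \hat\Sig_{\leq r} \mid N_{\hat G}^{\hat\sig}(u^1) \cap \hat X \neq \emptyset\}$. The support is constant on each $\prsigeq_r$-class, and there are at most~$2^{|\hat\Sig_{\leq r}|}$ possible supports. Bounding the number of signatures by
\[
  |\hat\Sig_{\leq r}| = |\Sig_{\leq r}| = \textstyle\sum_{\ell=1}^r \xi^\ell \leq \xi^{r+1}
\]
produces the factor~$2^{\xi^{r+1}}$. It then remains to bound, for a fixed support~$S$, the number of $\prsigeq_r$-classes whose support equals~$S$, for which I expect the target to be~$|\hat X| = r|X|$.

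For vertices of support~$S$, the signatures outside~$S$ contribute only empty neighbourhoods and cannot separate them, so the relation there coincides with the $(\hat X,S)$-equivalence on $W_S = \bigcap_{\hat\sig \in S} N^{-\hat\sig}(\hat X)$. Here the laminar structure does the work: by Lemma~\ref{lem:sigma-laminar} the classes~$\{C_{\hat\sig}(u)\}_{\hat\sig \in S}$ running through a single~$u \in W_S$ are pairwise nested, since they all contain~$u$, hence form a chain, and the refinement class of~$u$ is the smallest of them, namely a \emph{single}-signature class~$C_{\hat\sig^\ast}(u)$. Each such class carries a nonempty block~$N^{\hat\sig^\ast}(u^1) \cap \hat X \subseteq \hat X$, and Lemma~\ref{lem:sigma-nb} makes the blocks belonging to one signature equal-or-disjoint; charging each class to an element of its block then yields at most~$|\hat X| = r|X|$ classes per support. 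Multiplying the at most~$2^{\xi^{r+1}}$ supports by the~$r|X|$ classes each gives $|V(G)/{\prsigeq_r}| \leq r\,2^{\xi^{r+1}}|X|$.

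The main obstacle is precisely this per-support count. The chain argument shows every refinement class is a single-signature neighbourhood class, but two such classes arising from \emph{different} signatures of the same length could a priori share a vertex of~$\hat X$, which would spoil the injective charging onto~$\hat X$ and leave only the weaker estimate~$|S|\cdot|X|$ from Lemma~\ref{lem:proper-sigma-complexity}. The crux will therefore be to show that distinct minimal blocks at a common level stay disjoint, for which I would return to the connectivity-and-colour-counting scheme underlying Observation~\ref{obs:centre-intersect} and Lemma~\ref{lem:sigma-nb}, exploiting that~$\hat c$ is $(2r+2)$-centred. Any unavoidable looseness in this step is harmless downstream, since the eventual Theorem~\ref{thm:nb-bound-centred} carries the larger exponent~$\xi^{r+2}$ and absorbs polynomial-in-$\xi^{r+1}$ overheads.
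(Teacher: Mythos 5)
Your proposal follows essentially the same route as the paper: the paper's proof of Lemma~\ref{lemma:sig-bound} consists precisely of reading ${\prsigeq_r}$ as the $(\hat X,\hat\Sig_{\leq r})$-equivalence on the first-level copies in~$\hat G$ and then applying Lemma~\ref{lem:proper-sigma-complexity} to \emph{every} subset $\hat\Sig\subseteq\hat\Sig_{\leq r}$, that is, stratifying by support and summing the per-support bound $|\hat\Sig|\cdot|\hat X|$ over all $2^{|\hat\Sig_{\leq r}|}$ supports. The one place where you depart from the paper is the attempted sharpening of the per-support count from $|S|\cdot|\hat X|$ to $|\hat X|$, and there your own suspicion is warranted: the chain argument does show that each refinement class is a single-signature class $C_{\hat\sig^\ast}(u)$, but Lemma~\ref{lem:sigma-nb} only forces blocks of one \emph{fixed} proper signature to be equal or disjoint, and blocks of two distinct signatures of the same length can meet inside~$\hat X$ (two paths of equal length with different colour sequences ending at the same $x\in X$ repeat no colour class and hence create no obstruction for a $(2r+2)$-centred colouring), so the injective charging onto~$\hat X$ is not available and this step cannot be completed as described. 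Fortunately it is also not needed: dropping it and invoking Lemma~\ref{lem:proper-sigma-complexity} once per support, which you name as your fallback, \emph{is} the paper's proof verbatim, so the weaker estimate is not a defect of your argument relative to the paper's. (One caveat that affects the paper as much as you: the identification of $\sum_{\hat\Sig\subseteq\hat\Sig_{\leq r}}|\hat\Sig|\cdot|\hat X| = |\hat\Sig_{\leq r}|\,2^{|\hat\Sig_{\leq r}|-1}\cdot r|X|$ with $r2^{\xi^{r+1}}|X|$ is loose and can fail for small~$\xi$; as you correctly observe, such polynomial overheads are absorbed by the exponent $\xi^{r+2}$ in Theorem~\ref{thm:nb-bound-centred}.)
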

\begin{proof}
  To obtain the bound, we apply Lemma~\ref{lem:proper-sigma-complexity} to
  every subset of signatures~$\hat \Sig \subseteq \hat \Sig_{\leq r}$. 
  Let~$\hat X \subseteq \hat G$ be the set containing all copies of vertices in~$X$. 
   \begin{align*}
    | V(G) / {\prsigeq_r} | &\leq |V(\hat G)/ \simeq^{\hat X^r}_{\hat \Sig_{\leq r}} |
    \leq \sum_{\hat \Sig \subseteq \hat \Sig_{\leq r} } |\hat \Sig| \cdot |\hat X| 
    = r2^{\xi^{r+1}} \cdot |X| \qedhere
  \end{align*}  
\end{proof}

\noindent
The proof of this section's theorem is now only a technicality.

\begin{proof}[Proof of Theorem~\ref{thm:nb-bound-centred}]
  By Lemma~\ref{lemma:sig-nb-equiv} and~\ref{lemma:sig-prop-nb-equiv} we have
  that
  \begin{gather*}
    |V(G) / {\nbeq_{r}}|  \leq |V(G) / {\rsigeq_{r+1}}|
    \leq |V(G) / {\prsigeq_{r+1}}|
  \end{gather*}
  Which, by Lemma~\ref{lemma:sig-bound}, is 
  at most~$(r+1)2^{{\chi_{2r+2}(G)}^{r+2}} \cdot |X|$ and the claim follows.
\end{proof}

%  dP   dP   dP  a88888b.  .88888.  dP        
%  88   88   88 d8'   `88 d8'   `8b 88        
%  88  .8P  .8P 88        88     88 88        
%  88  d8'  d8' 88        88     88 88        
%  88.d8P8.d8P  Y8.   .88 Y8.   .8P 88        
%  8888' Y88'    Y88888P'  `8888P'  88888888P 
%                                             
%    
\section{Neighbourhood Complexity and Weak Colouring Number}\label{sec:wcol}

\noindent
Having obtained a bound for the neighbourhood complexity in terms
of the $r$-centred colouring number, we now derive a bound in terms of the weak $r$-colouring number.
For the next proof, we say that two vertices $u,v \in V(G)$ have \emph{the same distances} 
to $Z\subseteq V(G)$ if for every $z\in Z$ we have $d_G(u,z)=d_G(v,z)$.

\nbweak*
% \begin{theorem}
%   For every graph~$G$ it holds that
%   \[
%     \nu_r(G) \leq 2^{\wcol_{2r}(G)}\wcol_{2r}(G).
%   \]
% \end{theorem}
\begin{proof}

    Fix a graph~$G$ and choose any subset~$\emptyset\neq X \subseteq V(G)$. We will show in the following that 
  \[
  |V(G) / {\nbeq_r}| \leq \left(\frac{1}{2}(2r+2)^{\wcol_{2r}(G)} \wcol_{2r}(G) + 1\right) |X|,
  \]
  from which the claim immediately follows.
  
  Let $\alpha_0 \in V(G) /{\nbeq_r}$ be the equivalence class of ${\nbeq_r}$ 
  corresponding to the vertices of $G$ with an empty $r$-neighbourhood in $X$
  and let $\mathcal W=\left(V(G) /{\nbeq_r}\right)\setminus \{\alpha_0\}$. Moreover, let $L\in\Pi(G)$ be 
  such that $\wcol_{2r}(G)=\max_{v\in  V(G)}|\Wreach_{2r}[G,L, v]|$. 
  We will estimate the neighbourhood complexity of $X$ via the neighbourhood complexity of a
  certain \emph{good} subset of $\Wreach_{r}[G,L, X].$

  For a vertex $v \in N^r(X)$ and a vertex $x \in N^r[v]\cap X$, let $\mathcal{P}_v^x$ 
  be the set of all shortest $(v,x)$-paths (of length at most $r$). 
  We define as $G^r[v]$ the graph induced by the union of the paths of all $\mathcal{P}_v^x$,
  namely
  \[
    G^r[v] = G\Big[\bigcup_{x \in N^r[v] \cap X} \bigcup_{P \in \mathcal P^x_v} V(P)\Big]. %G\big[  \bigcup_{x\in N^r(v)\cap X} V(P_v^x)  \big].
  \]
  By its construction,~$G^r[v]$ contains, for every~$x \in N^r[v] \cap X$, all shortest paths
  of length at most~$r$ that connect~$v$ to~$x$.

  Now, for every equivalence class $\kappa \in \mathcal W$, choose a representative
  vertex $v_{\kappa} \in \kappa$. Let $C = \{ v_{\kappa} \}_{\kappa \in \mathcal W}$ be
  the set of representative vertices for all classes in $\mathcal W$. Using the representatives from $C$,
  we define for every class~$\kappa\in \mathcal W$ the set (see Fig.~\ref{fig:wcol-proof})
  \[
    Y_{\kappa}=\Wreach_{r}[G^r[v_\kappa],L,v_\kappa]  
      \cap  \Wreach_{r}[G,L,  N^r[v_\kappa] \cap X]
  \]
  and join all such sets into
  $
    Y = \bigcup_{\kappa \in \mathcal W} Y_\kappa.
  $
  Then, 
  \[
      Y\subseteq \bigcup_{\kappa \in \mathcal W} \Wreach_{r}[G,L,  N^r[v_\kappa] \cap X]\subseteq \Wreach_r [G, L, X].
  \]
  Moreover, by definition and the fact that $L$ is an ordering achieving $\wcol_{2r}(G)$ (and not necessarily one achieving $\wcol_{r}(G)$), we have
  \[
    |Y_{\kappa}|\leq |\Wreach_r [G, L, v_\kappa]| 
	     \leq |\Wreach_{2r} [G, L, v_\kappa]|\leq \wcol_{2r}(G).
  \]
  Notice that for every $x \in N^r[v]\cap X$, the minimum vertex (according to~$L$) of a path in $\mathcal{P}_v^x$ will
  always belong to $Y_\kappa$, therefore the set~$Y_\kappa$ intersects every path of the sets 
  $\mathcal{P}_{v_\kappa}^x$ forming $G^r[v_\kappa]$.
  We want to see how many different equivalence classes of $\mathcal W$ produce the same $Y_\kappa$ set. 
  This will allow us to bound the 
  neighbourhood complexity of $X$ by relating it to the number of different $Y_\kappa$'s. \looseness-1
  
  Suppose that $\kappa\neq \lambda$ with $Y_\kappa=Y_\lambda=Z$. Recall that $Y_\kappa$ intersects all 
  the shortest paths from $v_\kappa$ to the vertices of $N^r[v_\kappa]\cap X$ and that $G^r[v_\kappa]$
  is formed by all such shortest paths. 
  Hence, if $v_\kappa$ and $v_\lambda$ have the same distances to $Z$,
  then we clearly get $N^r[v_\kappa]\cap X=N^r[v_\lambda]\cap X$, a contradiction. This means that if $Y_\kappa=Y_\lambda=Z$,
  the vertices $v_\kappa$ and $v_\lambda$ cannot have the same distances to $Z$. But there 
  are at most $(r+1)^{|Z|}$ possible configurations of distances of the vertices of a set $Z$ 
  to a vertex $v$ that has distance at most $r$ to every vertex of $Z$.  
  It follows that the number of equivalence classes of $\mathcal W$ that produce the same set~$Y_\kappa$
  through their representative $v_\kappa$ from $C$ is at most $(r+1)^{|Y_\kappa|}\leq (r+1)^{\wcol_{2r}(G)}$.
  
  \begin{figure}
  \begin{center}
    \includegraphics[scale=\smallfigscale]{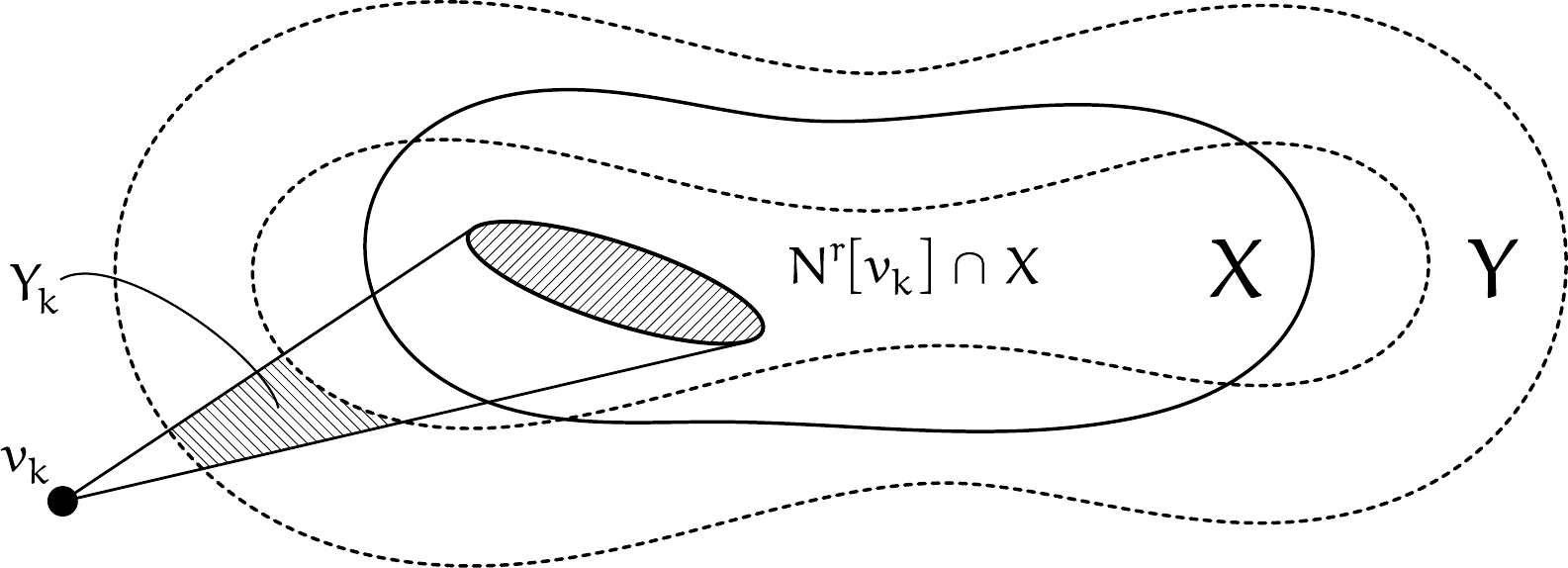}
  \end{center}
  \vspace*{-.5cm}
  \caption{\label{fig:wcol-proof}%
    A set $Y_\kappa$ and the set $Y$. %Illustration for the proof of Theorem~\ref{thm:nb-bound-weak}.
  }
\end{figure}

  Let $\mathcal{Y}:=\{Y_\kappa \mid \kappa \in \mathcal W\}$ be the set of all
   (\emph{different}) $Y_\kappa$'s, and define $\gamma \colon \mathcal{Y}\rightarrow Y$ by
  $\gamma(Y_\kappa)=\argmax_{y\in Y_\kappa } L(y)$. That is, $\gamma(Y_\kappa)$ is
  that vertex in~$Y_\kappa$ that comes last according to~$L$. Observe that---by definition---every vertex in~$Y_\kappa$ is weakly $r$-reachable
  from~$v_\kappa$. It follows that
  every vertex in~$Y_\kappa$ is weakly $2r$-reachable from~$\gamma(Y_\kappa)$ via $v_\kappa$.
  In other words, $Y_\kappa\subseteq \Wreach_{2r}[G,L,\gamma(Y_\kappa)]$.
  Consequently, for every vertex~$y\in \gamma(\mathcal{Y})$, it holds that\footnote{We remind the reader that this union expresses the union of a set in the set theoretical sense,
    i.e.~the union of a set is the union of all of its elements (as sets).}
  \[
    \bigcup \gamma^{-1}(y) \subseteq \Wreach_{2r}[G,L,y],
  \]
  i.e. the union $\bigcup \gamma^{-1}(y)$ of all $Y_\kappa$'s that choose the same vertex~$y$ via $\gamma$ has
  size at most~$\wcol_{2r}(G)$. But every set in the family $\gamma^{-1} (y)$ is a subset of $\bigcup \gamma^{-1}(y)$ that contains $y$. Since there are at most
$2^{|\bigcup \gamma^{-1}(y )|-1}$ different such subsets of $\bigcup \gamma^{-1}(y )$, the number of \emph{different} $Y_\kappa$'s for which the same vertex
  is chosen via $\gamma$ is bounded by~$2^{\wcol_{2r}(G)-1}$, i.e.
  \[
   |\gamma^{-1}(y)| \leq 2^{\wcol_{2r}(G)-1}.
  \]
  Recalling that one $Y_\kappa$ corresponds to at most
  $(r+1)^{\wcol_{2r}(G)}$ equivalence classes  of $\mathcal W$ and that $Y\subseteq \Wreach_r [G, L, X]$, we can now bound the size of $\mathcal W$ as follows:
  \begin{align*}
  |\mathcal W|
      &\leq (r+1)^{\wcol_{2r}(G)}\cdot |\mathcal{Y}|= (r+1)^{\wcol_{2r}(G)}\cdot \sum_{y\in \gamma(\mathcal{Y})}|\gamma^{-1}(y)| \\
  %   &\leq (r+1)^{\wcol_{2r}(G)} \cdot \sum_{y\in \gamma(\mathcal{Y})}2^{\left|\bigcup{\gamma^{-1}(y)} \right|} \\
      &\leq (r+1)^{\wcol_{2r}(G)} \cdot \sum_{y\in \gamma(\mathcal{Y})}2^{\wcol_{2r}(G)-1} \\
      &= \frac{1}{2}(2r+2)^{\wcol_{2r}(G)} \cdot   |\gamma(\mathcal{Y})|
  \end{align*}
  from which we obtain that
  \begin{align*}
   |V(G) / {\nbeq_r}| &\leq |\mathcal W| +1 \leq \frac{1}{2}(2r+2)^{\wcol_{2r}(G)} \cdot   |\gamma(\mathcal{Y})|+1 \\
                      &\leq \frac{1}{2}(2r+2)^{\wcol_{2r}(G)} \cdot |Y| +1\\
                      &\leq \frac{1}{2}(2r+2)^{\wcol_{2r}(G)} \cdot |\Wreach_{r}[G,L, X]|+1 \\
                      &\leq \frac{1}{2}(2r+2)^{\wcol_{2r}(G)}\wcol_{2r}(G) \cdot |X|+1 \\
                      &\leq \left(\frac{1}{2}(2r+2)^{\wcol_{2r}(G)}\wcol_{2r}(G)+1\right)|X|,
  \end{align*}
  as claimed.
\end{proof}

%   a88888b. dP     dP   .d888888   888888ba  
%  d8'   `88 88     88  d8'    88   88    `8b 
%  88        88aaaaa88a 88aaaaa88a a88aaaa8P' 
%  88        88     88  88     88   88   `8b. 
%  Y8.   .88 88     88  88     88   88     88 
%   Y88888P' dP     dP  88     88   dP     dP 
%                                             
%         
\section{Completing the Characterisation}

\noindent
We have seen in the previous two sections that bounded expansion implies
bounded neighbourhood complexity. Let us now prove the other direction to
arrive at the full characterisation. We begin by proving that every bipartite
graph with low neighbourhood complexity must have low minimum degree. To that
end, we will need the following Lemma.

% Lemma 4.4 in Sparsity (p. 74)
\begin{lemma}[\Nesetril \& Ossona de Mendez~\cite{Sparsity}]\label{lemma:skewed-bipartite}
  Let~$G = (A,B,E)$ be a bipartite graph and let~$1 \leq r \leq s \leq |A|$.
  Assume each vertex in~$B$ has degree at least~$r$. 

  Then there exists a subset~$A' \subseteq A$ and a
  subset~$B' \subseteq B$ such that $|A'| = s$ and~$|B'| \geq |B|/2$
  and every vertex in~$B'$ has at least $r\frac{|A'|}{|A|}$ neighbours
  in~$A'$.
\end{lemma}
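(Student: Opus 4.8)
The approach I would take is the probabilistic method (first moment). Set $n := |A|$ and choose $A' \subseteq A$ uniformly at random among all $s$-element subsets of $A$; this immediately guarantees $|A'| = s$. Fix a vertex $b \in B$ and let $X_b = |N(b) \cap A'|$ count the neighbours of $b$ surviving in $A'$. Since $A'$ is a uniform $s$-subset, $X_b$ is hypergeometrically distributed and, summing the survival probability $s/n$ over the $d(b) = |N(b)|$ edges incident to $b$,
\[
  \Ex[X_b] \;=\; d(b)\cdot\frac{s}{n} \;\ge\; r\cdot\frac{s}{n} \;=\; r\frac{|A'|}{|A|},
\]
where the inequality uses the hypothesis $d(b) \ge r$. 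Thus in expectation every $b$ already meets the target threshold; the whole point is to upgrade this to a statement holding simultaneously for at least half of $B$ under one common choice of $A'$.

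Call $b$ \emph{good} (for the sampled $A'$) if $X_b \ge r|A'|/|A|$ and \emph{bad} otherwise. The plan is to prove the per-vertex bound
\[
  \prob{b \text{ is good}} \ge \tfrac12 \qquad \text{for every fixed } b \in B,
\]
after which linearity of expectation gives $\Ex\bigl[|\{b \in B : b \text{ good}\}|\bigr] = \sum_{b \in B}\prob{b \text{ good}} \ge |B|/2$. Since a random variable attains at least its own mean on some outcome, there is a concrete $s$-subset $A'$ for which the set $B'$ of good vertices satisfies $|B'| \ge |B|/2$; this $A'$ and $B'$ then discharge all three required properties at once.

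The crux — and the step I expect to be the main obstacle — is the per-vertex estimate $\prob{X_b \ge r|A'|/|A|} \ge 1/2$, because here one must exploit the actual hypergeometric law of $X_b$ and not merely its mean. I would invoke the classical fact that a hypergeometric variable has median at least $\lfloor \Ex[X_b] \rfloor$ (its median and mean differ by less than one). Writing $\mu := \Ex[X_b] = d(b)s/n$ and noting that the threshold $rs/n$ does not exceed $\mu$, the event $\{X_b \ge rs/n\}$ contains $\{X_b \ge \operatorname{median}(X_b)\}$, which has probability at least $1/2$ by the definition of the median; this yields the bound. The delicate point is purely the integer rounding: $X_b$ is integer-valued whereas $rs/n$ need not be, so one really compares $\lceil rs/n \rceil$ against the median. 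This is benign whenever $rs/n$ is integral or the degrees carry a little slack above $r$, and in the remaining borderline case one reads the conclusion at the integer threshold $\lfloor rs/n\rfloor$.

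Finally, I would present the conclusion symmetrically through the \emph{bad} count if that reads more cleanly: the per-vertex bound says the expected number of bad vertices is at most $|B|/2$, so by Markov's inequality (equivalently, by the averaging above) there is a choice of $A'$ leaving at most $|B|/2$ bad vertices, and taking $B'$ to be its good vertices finishes the argument. No structural property of $G$ beyond the minimum-degree hypothesis on $B$ is used; the whole proof is a single application of the first moment method, with the hypergeometric median inequality supplying the one non-elementary ingredient.
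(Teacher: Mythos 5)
The paper never proves this lemma---it is imported as a black box from~\cite{Sparsity}---so there is no in-paper argument to compare against; the standard proof is indeed the uniformly random $s$-subset plus first-moment argument you set up, and your overall architecture (per-vertex probability bound, linearity of expectation, averaging) is the right one. The problem is precisely the step you flag as delicate: it is not a benign rounding issue but a genuine failure. The hypergeometric median fact gives $\Pr\big[X_b \ge \lfloor d(b)s/n\rfloor\big]\ge 1/2$, which via $d(b)\ge r$ yields the integer threshold $\lfloor rs/n\rfloor$; the lemma, however, demands $X_b\ge rs/n$, which for the integer-valued $X_b$ means $X_b\ge\lceil rs/n\rceil$. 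When $d(b)=r$ and $rs/n\notin\mathbb{Z}$ there is no slack whatsoever, so your fallback of ``reading the conclusion at the integer threshold $\lfloor rs/n\rfloor$'' is not a presentational choice in a borderline case---it is the only statement your argument delivers.

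Moreover, no repair is possible, because the statement as quoted is false in exactly that regime. Take $A=\{a_1,\dots,a_5\}$ and $B=\{b_1,\dots,b_5\}$ with $E$ a perfect matching, $r=1$, $s=2$. Every vertex of $B$ has degree at least $r$, and the conclusion would require a $2$-element set $A'$ together with at least $|B|/2=2.5$, hence at least $3$, vertices of $B$ each having at least $rs/|A|=2/5$, hence at least one, neighbour in $A'$; but any $2$-element $A'$ is seen by only $2$ vertices of $B$. So what your proof establishes---and what is actually true---is the variant with threshold $\lfloor r|A'|/|A|\rfloor$. That weaker form is all the paper ever needs: in Lemma~\ref{lemma:mindeg-nu-1} the quantity $r|A'|/|A|$ is large (around $2\nu^2$), the loss of less than $1$ is absorbed by integrality of the degree, and the remaining computation is unchanged. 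You should therefore state and prove the floor version explicitly rather than present the exact-threshold statement as proved; with that amendment your argument is correct, granted the median inequality for sampling without replacement, which is a genuine (if not quite ``classical'') result.
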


\noindent
The minimum degree and depth-one neighbourhood complexity $\nu_1$ of a
bipartite graph can now be related to each other as follows:

\begin{comment}
\begin{lemma}\label{lemma:mindeg-nu-0}
  Let~$G = (A,B,E)$ be a bipartite graph. Then~$\delta(G) \leq \nu_0(G)^3$.
\end{lemma}
\begin{proof}
  \def\r{4\nu^3 \log \nu}
  \def\t{2\nu^3}
  \def\thalf{\nu^3}
  \def\rovert{2\log \nu}
  Assume without loss of generality that~$|B| \geq |A|$. Let~$\nu = \nu_0(G)$
  and assume that~$\delta(G) > \nu^3$.
  We apply Lemma~\ref{lemma:skewed-bipartite} with~$r = \r$ and~$s = |A|/\t$
  and obtain a subgraph~$G'=(A',B',E')$ with
  \begin{itemize}
    \item $|A'| = |A|/\t$,
    \item $|B'| = |B|/2$ and thus~$|B'| \geq |A'| \thalf$,
    \item and~$\deg_{G'}(v) \geq \r \cdot \frac{|A'|}{|A|} = \r / \t = \rovert. $ for~$v \in B'$.
  \end{itemize}
  Note that if~$K_{\nu^2,2\log \nu}$ is a subgraph of~$G'$ then
  \[
    \nu_0(G) \geq \nu_0(G') \geq \frac{\nu^2}{2\log \nu} > \nu,
  \]
  a contradiction. Let us partition~$B'$ into
  twin-classes~$B'_1, \ldots B'_\ell$. Since each twin-class has at
  least~$2 \log \nu$ neighbours, the size of each twin class must be
  bounded by~$|B_i| < \nu^2$ and therefore the number of twin-classes
  is at least~$\ell > |B'| / \nu^2$. Since each twin-class has, by
  definition, a unique neighbourhood in~$A'$, we conclude that
  \[
    \nu_0(G') \geq \frac{\ell}{|A'|} > \frac{|B'|}{\nu^2} \frac{\nu^3}{|B'|} = \nu,
  \]
  a contradiction.
\end{proof}
\end{comment}

\begin{lemma}\label{lemma:mindeg-nu-1}
  Let~$G = (A,B,E)$ be a non-empty bipartite graph.
  Then 
  \[
    \delta(G) < 4\nu_1(G)\big(2\ceil{\log\nu_1(G)}+1\big)\big(64\nu_1(G)^3 \ceil{\log\nu_1(G)}+16\nu_1(G)^2+1\big).
  \]
\end{lemma}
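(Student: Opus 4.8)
The plan is to argue by contradiction: assuming $\delta(G)$ is at least the stated bound, I will exhibit a subgraph and a set $X$ witnessing $\nu_1(G) > \nu$. Throughout write $\nu = \nu_1(G)$ and $\lambda = \ceil{\log\nu}$, so that $2^\lambda \geq \nu$. Since both $\delta(G)$ and $\nu_1(G)$ are unchanged when the two sides of the bipartition are swapped, I may assume $|B| \geq |A|$; then every vertex of $B$ sends all of its (at least $\delta(G)$) edges into $A$. The one structural fact I will use repeatedly is that for $X \subseteq A$ and $v \in B$ we have $N^1[v]\cap X = N(v)\cap X$, and that---because $\nu_1$ maximises over all subgraphs $H \subgraph G$, with neighbourhoods taken in $H$---edges may be deleted freely when building a witness.

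First I would record the key \emph{realisation} estimate. Suppose $G$ contains $K_{m,q}$ with its $q$-side $Q \subseteq A$; deleting, for each of $\min(m,2^q)$ vertices on the $m$-side, the appropriate edges into $Q$ yields a subgraph in which these vertices realise that many pairwise distinct traces on $X = Q$, whence $\nu_1(G) \geq \min(m,2^q)/q$. Taking $q = 2\lambda+1$ gives $2^q \geq 2\nu^2 > \nu(2\lambda+1)$, using $2\nu > 2\lambda+1$ (readily checked for all $\nu\geq 1$), so any family of more than $\nu(2\lambda+1)$ vertices of $B$ sharing at least $2\lambda+1$ common neighbours forces $\nu_1(G) > \nu$. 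In particular, once every relevant $B$-vertex has at least $2\lambda+1$ neighbours in a fixed set $A'$, each twin class of $B$-vertices with respect to their traces on $A'$ (two such vertices being twins when they share their whole, size-$\geq 2\lambda+1$, neighbourhood in $A'$) has at most $\nu(2\lambda+1)$ members.

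The core of the argument is then to manufacture, via Lemma~\ref{lemma:skewed-bipartite}, a pair $A' \subseteq A$, $B' \subseteq B$ with $|A'| = s$, $|B'| \geq |B|/2$, and every vertex of $B'$ having at least $2\lambda+1$ neighbours in $A'$, while simultaneously $|B'| > \nu^2(2\lambda+1)\,|A'|$. Granting this, partition $B'$ into twin classes by their traces on $A'$; by the previous paragraph there are $\ell \geq |B'|/(\nu(2\lambda+1)) > \nu s$ of them, each a distinct set $N^1[v]\cap A'$, so taking $X = A'$ inside $H = G[A'\cup B']$ gives $\nu_1(G) \geq \ell/|A'| > \nu$, the desired contradiction. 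To invoke Lemma~\ref{lemma:skewed-bipartite} I would choose $r_{\text{skew}}$ and $s$ so that the diluted degree $r_{\text{skew}}\,s/|A| \geq 2\lambda+1$ and $|B|/2 > \nu^2(2\lambda+1)\,s$ both hold; the hypothesis $\delta(G) \geq r_{\text{skew}}$ is exactly what consumes the assumed lower bound on $\delta(G)$.

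The main obstacle---and the source of the precise constants $64\nu^3\lambda+16\nu^2+1$ and the factor $4\nu(2\lambda+1)$---is meeting all numerical side-conditions of Lemma~\ref{lemma:skewed-bipartite} at once: the feasibility window $r_{\text{skew}} \leq s \leq |A|$, the integrality of $s$ (handled by floors), and the tension that when $\delta(G)$ is already comparable to $|A|$ one cannot both keep $s$ small (to make $|B'|$ dominate $|A'|$) and keep $s \geq r_{\text{skew}}$. Resolving this calls for choosing $s$ as an explicit function of $|A|$, $|B|$, $\nu$ and $\lambda$ and checking a short case distinction---essentially whether or not $|B|$ is much larger than $|A|$---together with verifying the degenerate small-$\nu$ cases (for instance $\nu = 1$, where $\lambda = 0$), which are absorbed by the additive constants. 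Once the parameters are pinned down, the contradiction is immediate and the claimed strict inequality follows.
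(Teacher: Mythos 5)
Your overall strategy is the paper's: assume $\delta(G)$ is large, apply Lemma~\ref{lemma:skewed-bipartite} to extract $A',B'$ with controlled degrees into $A'$, exclude a biclique via the ``realisation'' estimate $\nu_1(K_{m,q})\geq \min(m,2^q)/q$, and count twin classes of $B'$ over $X=A'$. The realisation estimate itself is correct. The gap is quantitative, and it is not the harmless bookkeeping you defer to the end: the parameters you have already committed to cannot be made to satisfy the side-conditions of Lemma~\ref{lemma:skewed-bipartite} under the stated hypothesis. You use the biclique with its small side $q=2\lambda+1$ inside $A$, which only caps twin classes at $\nu(2\lambda+1)$ --- a bound carrying an extra factor of $\nu$ --- and therefore forces you to require $|B'|>\nu^2(2\lambda+1)\,|A'|$. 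Writing $d=2\lambda+1$ for your degree demand into $A'$, the constraints $r_{\text{skew}}\,s/|A|\geq d$, $s<|B|/(2\nu^2 d)$ and $s\geq r_{\text{skew}}$ jointly force $|A|\geq r_{\text{skew}}^2/d$ with $r_{\text{skew}}\geq d|A|/s$, and even the optimal choice $r_{\text{skew}}=\sqrt{d|A|}$ then needs $|A|>4\nu^4(2\lambda+1)^3=\Theta(\nu^4\log^3\nu)$ in the worst case $|B|=|A|=\delta(G)$. The stated lower bound on $\delta(G)$ is only $\Theta(\nu^4\log^2\nu)$, so for $\lambda\gtrsim 16$ (\ie $\nu_1(G)\gtrsim 2^{16}$) the window for $s$ is empty and your argument does not go through; at best it proves a bound weaker by a factor of $\log\nu_1(G)$.

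The fix --- which is what the paper does --- is to use the biclique in the transposed orientation. Demand that every vertex of $B'$ has about $2\nu^2$ neighbours in $A'$ (not merely $2\lambda+1$), and observe that a twin class of size $2\lambda+1$ whose members share a neighbourhood of size $2\nu^2$ contains a $K_{2\nu^2,\,2\lambda+1}$ in which the \emph{logarithmic} side lies in $B'$ and serves as the trace set $X$; since $\min(2\nu^2,2^{2\lambda+1})=2\nu^2>\nu(2\lambda+1)$, this already forces $\nu_1>\nu$. Twin classes are then capped at $2\lambda+1$ members with no factor of $\nu$, so one only needs $|B'|\geq\nu(2\lambda+1)\,|A'|$, and the feasibility requirement drops to $|A|\gtrsim\nu^4(2\lambda+1)^2$, which the stated bound supplies. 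With $d=2\nu^2$ and the ratio $\nu(2\lambda+1)$ the remaining parameter choices ($r_{\text{skew}}\approx 8\nu^3\lambda$, $s=\lfloor|A|/(2\nu(2\lambda+1))\rfloor$) close as in the paper.
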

\begin{proof}
  \def\r{8\nu^3\log\nu+4\nu^2}
  \def\t{\nu(2\log\nu+1)}
  \def\thalf{\nu^3}
  \def\rovert{2\nu^2}
  Let 
  \[
    \alpha = 4\nu_1(G)\big(2\ceil{\log\nu_1(G)}+1\big)\big(64\nu_1(G)^3 \ceil{\log\nu_1(G)}+16\nu_1(G)^2+1\big)
  \]
  and suppose that $\delta(G) \geq \alpha.$ Assume without loss of generality
  that~$|B| \geq |A|$ and let~$\nu = 2^{\ceil{\log\nu_1(G)}}$. Observe
  that both $\nu,\log\nu$ are integers and that $\nu_1(G)\leq\nu <2\nu_1(G)$.
  Therefore, 
  \[
    |B| \geq |A|\geq \delta(G) > 2\t\big(\r+1\big).
  \]
  Let us apply Lemma~\ref{lemma:skewed-bipartite} on $G$ with~$r = \r+1$ and~$s =
  \lfloor\frac{|A|}{2\t}\rfloor$. Notice that this is indeed possible, because
  $|A|>2\t\cdot r$ and therefore $s\geq r$. We obtain a
  subgraph~$G'=(A',B',E')$ with
  \begin{enumerate}
    \item $ \frac{|A|}{2\t} -1 < |A'| =s \leq \frac{|A|}{2\t}$,
    \item $|B'| \geq \frac{|B|}{2}$, and thus~$|B'| \geq \frac{|A|}{2} \geq \t|A'|$,
    \item
      and such that for every~$v \in B'$ we have that $\deg_{G'}(v) \geq r \cdot \frac{|A'|}{|A|}$.
  \end{enumerate}
  Combining the first and third property with~$|A|>2\t\cdot r$, we obtain
  \begin{align*}
    \deg_{G'}(v) &\geq r \cdot \frac{|A'|}{|A|} > r \Big(\frac{1}{2\t}-\frac{1}{|A|}\Big) \\
                 &> r \Big(\frac{1}{2\t}-\frac{1}{2\t\cdot r}\Big) \\
                 &= \frac{r-1}{2\t}= \rovert.
  \end{align*}  
  Now, note that any graph $H$ with at least two vertices trivially has
  $\nu_1(H)\geq 2$ by taking $X$ to be a single vertex of $H$. Hence,
  if~$K_{\rovert,2\log \nu+1}$ is a subgraph of~$G'$, we have that
  \[
    \nu_1(G) \geq \nu_1(G') \geq \nu_1(K_{\rovert,2\log \nu+1}) \geq
    \frac{\rovert}{2\log \nu+1} > \nu,
  \]
  where the last inequality follows by the fact that $\nu\geq 2$, a contradiction. 
  
  So, let us call two vertices $u,v\in V(G')$ \emph{twins} if $N^1_{G'}(u)=N^1_{G'}(v)$ and let us partition~$B'$ into
  twin-classes~$B'_1, \ldots B'_\ell$. Since each twin-class has at
  least~$\rovert$ neighbours, the size of each twin-class must be
  bounded by~$|B'_i| < 2\log\nu+1$. Hence, the number of
  twin-classes is at least~$\ell > \frac{|B'|}{2\log\nu+1}$. Since each
  twin-class has, by definition, a unique neighbourhood in~$A'$, we
  conclude that
  \[
    \nu_1(G') \geq \frac{\ell}{|A'|} > \frac{|B'|}{2\log\nu+1}
    \frac{\t}{|B'|} = \nu\geq \nu_1(G),
  \]
  a contradiction.
\end{proof}

\noindent
It easily follows that every graph with low neighbourhood complexity must have low average degree.

\begin{corollary}\label{cor:nu-subgraph}
  Let~$G$ be a graph. Then $\topgrad_0(G) < 5445 \cdot \nu_1(G)^4 \log^2 \nu_1(G)$.
\end{corollary}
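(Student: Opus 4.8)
The plan is to read $\topgrad_0(G)$ as an edge density and funnel it into the bipartite minimum-degree bound of Lemma~\ref{lemma:mindeg-nu-1}. First I would note that a $0$-shallow topological minor is nothing but a subgraph: at depth $0$ every branch path has length at most $1$, so $H \stminor^0 G$ holds exactly when $H$ is (isomorphic to) a subgraph of $G$. Hence $\topgrad_0(G) = \max_{H \subseteq G} \|H\|/|H|$. If $G$ is edgeless the left-hand side is $0$ and there is nothing to prove, so fix a subgraph $H$ realising the maximum and set $\rho = \topgrad_0(G) = \|H\|/|H| > 0$; in particular $G$ has at least two vertices, so $\nu := \nu_1(G) \ge 2$.

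Next I would perform two standard extractions. (i) \emph{From density to minimum degree:} repeatedly delete a vertex of degree at most $\rho$ from $H$. Each deletion removes one vertex and at most $\rho$ edges, so the invariant $\|\cdot\| \ge \rho\,|\cdot|$ is preserved; since a nonempty graph with this invariant cannot have all degrees at most $\rho$, the process halts at a nonempty subgraph $H'$ with $\delta(H') > \rho$. (ii) \emph{From $H'$ to a bipartite graph with comparable minimum degree:} take a bipartition $(V_1,V_2)$ of $V(H')$ maximising the number of crossing edges, and let $G' = (V_1,V_2,E')$ be the spanning bipartite subgraph formed by the crossing edges. Optimality of the cut forces each vertex to keep at least half of its $H'$-neighbours across the cut, whence $\delta(G') \ge \delta(H')/2 > \rho/2$ and $G'$ is a genuine nonempty bipartite graph.

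I would then apply Lemma~\ref{lemma:mindeg-nu-1} to $G'$. Because $G'$ is a subgraph of $G$ and every subgraph of $G'$ is also a subgraph of $G$, the defining maximum for $\nu_1$ gives $\nu_1(G') \le \nu_1(G) = \nu$; and the right-hand side of Lemma~\ref{lemma:mindeg-nu-1} is nondecreasing in its argument, so
\[
  \tfrac{\rho}{2} < \delta(G') < 4\nu\big(2\ceil{\log\nu}+1\big)\big(64\nu^3\ceil{\log\nu}+16\nu^2+1\big).
\]
Rearranging yields $\topgrad_0(G) = \rho < 8\nu\big(2\ceil{\log\nu}+1\big)\big(64\nu^3\ceil{\log\nu}+16\nu^2+1\big)$.

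The only remaining task is to absorb this product into $5445\,\nu^4\log^2\nu$, which is a direct computation using $\nu \ge 2$ (hence $\log\nu \ge 1$ and $\ceil{\log\nu} \le \log\nu + 1$). The dominant term is $1024\,\nu^4\ceil{\log\nu}^2$, and the constant $5445$ is chosen with just enough slack to swallow it together with the lower-order $\nu^3$ and $\nu^2$ contributions and the ceiling. I expect this final bookkeeping to be the main obstacle: it is elementary, but genuinely delicate because the inequality is tightest near $\nu_1(G) \approx 2$, where the ceiling $\ceil{\log\nu}$ jumps and leaves little room. All the structural content of the argument lies in the two extractions and the subgraph-monotonicity of $\nu_1$, which are routine.
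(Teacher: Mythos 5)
Your proof is correct and follows essentially the same route as the paper: extract from a densest subgraph a bipartite piece whose minimum degree is at least half the density, feed it into Lemma~\ref{lemma:mindeg-nu-1} using the subgraph-monotonicity of $\nu_1$, and absorb the resulting expression into $5445\,\nu_1(G)^4\log^2\nu_1(G)$ (the paper merely performs your two extraction steps in the opposite order, taking the half-the-edges bipartite subgraph first and then deleting low-degree vertices, which costs the same factor of $2$). Your observation that the constant is tight as $\nu_1(G)\to 2^{+}$, where $\ceil{\log\nu_1(G)}$ jumps to $2$, is accurate --- indeed $5445=87120/16$ is chosen to make that limit an equality --- and the paper leaves this final computation at the same level of detail as you do.
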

\begin{proof}
  We assume that~$\topgrad_0(G) = \|G\|/|G|$, otherwise we restrict ourselves
  to a suitable subgraph of~$G$ with that property. The case where $|G|=1$ is trivial, therefore we may assume that $|G|\geq 2$. 
  It is folklore that~$G$
  contains a bipartite graph~$H$ such that~$\|H\| \geq \|G\|/2$. We can further
  ensure that~$\delta(H) \geq \|H\|/|H|$ by excluding vertices of lower degree
  (this operation cannot decrease the density of~$H$). Applying Lemma~\ref{lemma:mindeg-nu-1}
  to~$H$, we obtain that
  \[
     \topgrad_0(G) = \frac{\|G\|}{|G|} \leq 2\frac{\|H\|}{|H|} \leq 2\delta(H).
  \]
  We apply the bound provided by Lemma~\ref{lemma:mindeg-nu-1} and relax it to
  the more concise polynomial~$(5445/2) \cdot \nu_1(G)^4 \log^2 \nu_1(G)$,
  using the fact that~$\nu_1(G) \geq 2$.
\end{proof}

\noindent
The next theorem now leads to the full characterisation as stated in
Theorem~\ref{thm:bndexp-equals-bndnc}.

\begin{theorem}\label{thm:bndnc-to-bndexp}
  For every graph~$G$ and every half-integer~$r$ it holds that
  \[
    \topgrad_r(G) \leq (2r+1) \max\big\{ 5445 \nu_1(G)^4 \log^2 \nu_1(G), \, \nu_2(G), \ldots, \, \nu_{\ceil{r+1/2}}(G) \big\}.
  \]  
\end{theorem}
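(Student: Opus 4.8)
The plan is to take a densest depth-$r$ shallow topological minor and charge its edges---grouped by the length of their embedding paths---to distinct $t$-neighbourhoods for an appropriate small $t\le\ceil{r+1/2}$, so that neighbourhood complexity caps the number of edges in each group. Concretely, fix $H\stminor^r G$ attaining $\topgrad_r(G)=\|H\|/|H|$, together with a (standard, \emph{clean}) topological minor embedding $(\phi_V,\phi_E)$ of depth~$r$; write $X=\phi_V(V(H))$ for the set of branch vertices, so that $|X|=|H|$ and each edge is mapped to a path of length at most~$2r+1$ whose internal vertices are pairwise disjoint and avoid~$X$. Partition $E(H)=\bigcup_{\ell=1}^{2r+1}E_\ell$ according to the exact length~$\ell$ of the embedding path, so that $\|H\|=\sum_{\ell=1}^{2r+1}|E_\ell|$. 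I would first dispose of $\ell=1$: the length-one paths are genuine edges of $G[X]$, hence $|E_1|=\|G[X]\|\le\topgrad_0(G)\cdot|X|<5445\,\nu_1(G)^4\log^2\nu_1(G)\cdot|X|$ by Corollary~\ref{cor:nu-subgraph}.

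For each $\ell\ge 2$ set $t=\ceil{\ell/2}$ and let $\hat G_\ell\subgraph G$ be the (not necessarily induced) subgraph whose edges are exactly those lying on some length-$\ell$ embedding path, with $X_\ell:=X\cap V(\hat G_\ell)$ the branch endpoints occurring on these paths. The heart of the argument is the local claim that for an edge $ab\in E_\ell$ with embedding path $a=p_0p_1\cdots p_\ell=b$, the vertex $c=p_{\floor{\ell/2}}$ satisfies
\[
  N^{t}_{\hat G_\ell}[c]\cap X \;=\; \{a,b\}.
\]
Indeed, $c$ is an internal vertex of its path and, by internal disjointness and cleanness, has degree exactly two in $\hat G_\ell$; thus every walk leaving $c$ runs along this path until it first meets a branch vertex. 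One reaches $a$ at distance $\floor{\ell/2}$ and $b$ at distance $\ceil{\ell/2}=t$, while any further branch vertex lies at distance at least $\ell$ from $a$ (one must traverse a whole length-$\ell$ path to reach the next branch vertex). Since the remaining budget past $a$ is $t-\floor{\ell/2}\le 1<\ell$, no third branch vertex is within distance~$t$, establishing the claim.

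Distinct edges of $E_\ell$ yield distinct (internally disjoint) centres $c$, and by the claim pairwise distinct values $N^t_{\hat G_\ell}[c]\cap X_\ell=\{a,b\}$. Hence the number of $\simeq^{\hat G_\ell,X_\ell}_t$-classes is at least $|E_\ell|$, so by the definition of neighbourhood complexity applied to the subgraph $\hat G_\ell\subgraph G$ with vertex set $X_\ell$,
\[
  |E_\ell|\le \big|V(\hat G_\ell)/{\simeq^{\hat G_\ell,X_\ell}_t}\big|\le \nu_t(G)\cdot|X_\ell|\le \nu_t(G)\cdot|X|.
\]
Summing over the $2r+1$ length classes, and noting that each index $t=\ceil{\ell/2}$ lies in $\{1,\dots,\ceil{r+1/2}\}$ and that $\nu_1(G)\le 5445\,\nu_1(G)^4\log^2\nu_1(G)$ (since $\nu_1(G)\ge 2$), every summand is at most the displayed maximum times $|X|$. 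Dividing $\|H\|$ by $|H|=|X|$ then yields the claimed bound.

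The hard part is the neighbourhood-equality claim, which is delicate and depends on three things: building $\hat G_\ell$ from paths of the single length $\ell$ as a \emph{non-induced} subgraph (so that shorter paths cannot let a walk escape past $a$ into a nearby branch vertex); the cleanness of the embedding, which makes internal vertices degree-two and disjoint from $X$; and the budget estimate $\ceil{\ell/2}-\floor{\ell/2}\le 1$. Everything else (the length partition, the $\ell=1$ reduction to $\topgrad_0$, and the absorption of $\nu_1$ into the $5445$-term) is routine bookkeeping.
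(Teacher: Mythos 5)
Your proposal is correct and takes essentially the same route as the paper: both arguments classify the edges of a densest depth-$r$ topological minor by the exact length $\ell$ of their embedding paths, handle $\ell=1$ via Corollary~\ref{cor:nu-subgraph}, and for $\ell\ge 2$ charge each edge to the distinct set $\{a,b\}$ realised as the closed $\ceil{\ell/2}$-neighbourhood in the branch set of the path's middle vertex, using internal disjointness to keep a third branch vertex out of range. The only (immaterial) difference is bookkeeping: the paper keeps just the most frequent length class and pays a factor $2r+1$, whereas you sum over all $2r+1$ classes and bound each by the maximum.
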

\begin{proof}
  Fix~$r$ and let~$H \stminor^r G$ be an $r$-shallow topological minor of
  maximal density, \ie $\topgrad_0(H) = \topgrad_r(G)$. Let further~$\phi_V,\phi_E$
  be a topological minor embedding of~$H$ into~$G$ of depth~$r$.

  Let us label the edges of~$H$ by the respective path-length in the embedding
  $\phi_V,\phi_E$: an edge~$uv \in H$ receives the label~$\|\phi_E(uv)\|$.
  Let~$r'$ be the label of highest frequency and let~$H' \subgraph H$ be 
  the graph obtained from~$H$ by only keeping edges labelled with~$r'$.
  Since there were up to~$2r+1$ labels in~$H$, we have that~$(2r+1)\|H'\| \geq |H|$
  and therefore
  \begin{equation}\label{eq:grads}
    \topgrad_r(G) = \topgrad_0(H) \leq (2r+1)\frac{\|H'\|}{|H'|} \leq (2r+1)\topgrad_0(H').
  \end{equation}

  \noindent
  First, consider the case that~$r' = 1$, \ie $H'$ is a subgraph of~$G$. Combining 
  (\ref{eq:grads}) with Corrollary~\ref{cor:nu-subgraph}, we obtain
  \begin{align*}
    \topgrad_r(G) &\leq (2r+1)\topgrad_0(H') \leq (2r+1)\topgrad_0(G) \\
                  &\leq (2r+1) \cdot 5445\,\nu_1(G)^4 \log^2 \nu_1(G).
  \end{align*}
  Otherwise, assume that~$r' \geq 2$, \ie every edge of~$H'$ is embedded into a
  path of length at least~$2$ in~$G$ by~$\phi_V,\phi_E$. Construct the subgraph~$G' \subgraph G$
  that contains all edges and vertices involved in the embedding of~$H'$ into~$G$, that
  is, $G'$ has vertices~$\bigcup_{v \in H'} V(\phi_V(v)) \cup \bigcup_{e \in H'} V(\phi_E(e))$ and
  edges~$\bigcup_{e \in H'} E(\phi_E(e))$. 

  Let~$X = \bigcup_{v \in H'} V(\phi_V(v))$ and let~$S \subseteq V(G')$ be a
  set constructed as follows: for every edge~$e \in H'$ we add the middle
  vertex of the path~$\phi_E(e)$ to~$S$---in case $r'$ is odd, we pick one of
  the two vertices that lie in the middle of~$\phi_E(e)$ arbitrarily. Because~$X$
  is an independent set in~$G'$ and~$r' > 1$, every vertex in~$S$ has exactly two neighbours
  in~$X$ at distance~$\ceil{r'/2}$ in the graph~$G'$. By construction, there is a one-to-one
  correspondence between these $\ceil{r'/2}$-neighbourhoods and the edges of~$H'$. Accordingly,
  \[
    \|H'\| = | \{N_{G'}^{\ceil{r'/2}}(v) \cap X \}_{v \in S}| 
  \]
  and therefore, using also the fact that~$G'$ is a subgraph of~$G$,
  \[
    \frac{\|H'\|}{|H'|} = \frac{| \{N_{G'}^{\ceil{r'/2}}(v) \cap X \}_{v \in S}|}{|X|} \leq \nu_{\ceil{r'/2}}(G') \leq \nu_{\ceil{r'/2}}(G).
  \]
  which, taken together with (\ref{eq:grads}) and the fact that~$G'$ is a subgraph of~$G$, yields
  \[
    \topgrad_r(G) \leq (2r+1)\topgrad_0(H') \leq (2r+1) \nu_{\ceil{r'/2}}(G).
  \]
  Putting everything together, we finally arrive at
  \[
    \topgrad_r(G) \leq (2r+1) \max\big\{ 5445\,\nu_1(G)^4 \log^2 \nu_1(G), \, \nu_2(G), \ldots, \, \nu_{\ceil{r+1/2}}(G) \big\}.
  \]  
  proving the theorem.
\end{proof}

\noindent
We conclude that graph classes with bounded neighbourhood complexity have
bounded expansion. Theorem~\ref{thm:bndexp-equals-bndnc} follows by
Theorems~\ref{thm:nb-bound-centred}, \ref{thm:nb-bound-weak} 
and~\ref{thm:bndnc-to-bndexp}.

\section{Concluding Remarks}
\noindent
One should note that in Theorems~\ref{thm:nb-bound-centred} 
and~\ref{thm:nb-bound-weak} the derived bounds are \emph{exponential} in the
measures~$\chi_{2r+2}$ and~$\wcol_{2r}$. Consequently, we cannot use
neighbourhood complexity to characterise nowhere dense classes: in these
classes, the quantities~$\chi_r$ and~$\wcol_r$ can only be bounded by
$O(|G|^{o(1)})$ which only results in superpolynomial bounds for~$\nu_r$.

This constitutes an unusual phenomenon in the following sense: so far, every known characterisation of
bounded expansion translated to a direct characterisation of nowhere denseness, but this has not yet been the case
for neighbourhood complexity. It would be remarkable if one could only characterise the property of bounded expansion
through neighbourhood complexity and not that of nowhere denseness.
So far, it is only known that~$\nu_1$ is bounded 
by~$O(|G|^{o(1)})$ in nowhere dense classes~\cite{BndExpKernels}.
We pose as an interesting open question whether this holds true for
$\nu_r$ for all~$r$, or whether nowhere dense classes can indeed have 
a neighbourhood complexity that cannot be bounded by such a function.

\bibliographystyle{habbrv}
\bibliography{biblio,conf}

\end{document}